\newcommand{\cmark}{\ding{51}}%
\newcommand{\xmark}{\ding{55}}%
\renewenvironment{proof}[1][\proofname]{\par
  \normalfont
  \topsep3\p@\@plus3\p@ \trivlist
  \item[\hskip3\labelsep\itshape
    #1\@addpunct{.}]\ignorespaces
}{%
  \qed\endtrivlist
}
\newtheoremstyle{note}
{3pt}
{3pt}
{}
{\parindent}
{\itshape}
{:}
{.5em}
{}
\theoremstyle{note}
\newtheorem{lem}{Lemma}
\newtheorem{rem}{Remark}
\DeclareMathOperator*{\argmax}{arg\,max}
\def\BibTeX{{\rm B\kern-.05em{\sc i\kern-.025em b}\kern-.08em
    T\kern-.1667em\lower.7ex\hbox{E}\kern-.125emX}}
\def\expandafter\normalsize\expandafter{%
    \normalsize
    \setlength\abovedisplayskip{-4pt}
    \setlength\belowdisplayskip{2pt}
    \setlength\abovedisplayshortskip{-3pt}
    \setlength\belowdisplayshortskip{2pt}
}
\algnewcommand\algorithmicinput{\textbf{Output:}}
\algnewcommand\Output{\item[\algorithmicinput]}
\begin{document}

\renewcommand{\algorithmicrequire}{\textbf{Input:}}
\renewcommand{\algorithmicensure}{\textbf{Initialize:}}
\let\oldReturn\Return
\renewcommand{\Return}{\State\oldReturn}

\history{Date of publication xxxx 00, 0000, date of current version xxxx 00, 0000.}
\doi{10.1109/ACCESS.2017.DOI}

\title{An Efficient Interference-Aware Constrained Massive MIMO Beamforming for mm-Wave JSDM}
\author{\uppercase{Murat Bayraktar}, \IEEEmembership{Student Member, IEEE},
\uppercase{Gokhan M. Guvensen}, \IEEEmembership{Member, IEEE}}
\address{Middle East Technical University, Ankara, Turkey (e-mail: muratbay@metu.edu.tr, guvensen@metu.edu.tr)}

\tfootnote{``This work was supported in part by TUBITAK under Project 218E039''}

\markboth
{Bayraktar \headeretal: An Efficient Interference-Aware Constrained Massive MIMO Beamforming for mm-Wave JSDM}
{Bayraktar \headeretal: An Efficient Interference-Aware Constrained Massive MIMO Beamforming for mm-Wave JSDM}

\corresp{Corresponding author: Murat Bayraktar (e-mail: muratbay@metu.edu.tr).}

\begin{abstract}

Low-complexity beamformer design with practical constraints is an attractive research area for hybrid analog/digital systems in mm-wave massive multiple-input multiple-output (MIMO). This paper investigates interference-aware pre-beamformer (analog beamformer) design for joint spatial division and multiplexing (JSDM) which is a user-grouping based two-stage beamforming method. Single-carrier frequency domain equalization (SC-FDE) is employed in uplink frequency-selective channels. First, unconstrained slowly changing statistical analog beamformer of each group, namely, generalized eigenbeamformer (GEB) which has strong interference suppression capability is designed where the mutual information in reduced dimension is maximized. Then, constant-modulus constrained approximations of unconstrained beamformer are obtained by utilizing alternating minimization algorithms for fully connected arrays and fixed subarrays. In addition, a dynamic subarray algorithm is proposed where the connections between radio frequency (RF) chains and antennas are changed with changing channel statistics. Convergence of the proposed alternating minimization-based algorithms are provided along with their complexity analysis. It is observed that additional complexity of proposed algorithms is insignificant for the overall system design. Although most of the interference is suppressed with the help of proposed constrained beamformers, there may be some residual interference after the analog beamforming stage. Therefore, linear minimum mean square error (LMMSE) type digital beamformers, which take the residual interference in reduced dimension into account, are proposed instead of zero-forcing (ZF) type. Simulation results verify the superiority of the proposed interference-aware constrained design over existing approaches in terms of beampattern, spectral efficiency, outage capacity and channel estimation accuracy.

\end{abstract}

\begin{keywords}
Hybrid beamforming, mm-wave systems, massive MIMO, constrained beamformer, dynamic subarray
\end{keywords}

\titlepgskip=-16pt

\maketitle

\section{Introduction}
\label{sec:introduction}

\PARstart{M}{assive} multiple-input multiple-output (MIMO) at mm-wave bands is the frontier technology for next generation wireless communication systems due to the availability of vast spectrum \cite{mMIMO,mMIMOoverview}. Hybrid analog/digital structure is an efficient approach for these systems as it has lower hardware complexity and energy consumption compared to fully digital beamforming and mm-wave bands exhibit joint angle-delay sparsity \cite{SPmmWave}. Hence, researchers have been working on low-complexity beamformer design for hybrid systems. \cite{SpaSparse,LowComplexityHybrid,AltMin,WeiYu,HybridSurvey}. Joint spatial division and multiplexing (JSDM) is a user-grouping based two-stage beamforming method where analog beamformers are designed by using slowly varying channel statistics, i.e., channel covariance matrices (CCMs) \cite{JSDM,JSDMmmWave}. User-grouping reduces signal processing complexity since digital processing of each group becomes independent from each other. In addition, only instantaneous intra-group channels in reduced dimension need to be learned which reduces channel estimation overhead. In this paper, we adopt JSDM framework in frequency-selective channels where we employ single-carrier frequency domain equalization (SC-FDE) for uplink transmission.

\subsection{Related Work}

Most of the existing researches for massive MIMO systems adopt flat-fading channel model by considering the use of orthogonal frequency division multiplexing (OFDM) \cite{mMIMOoverview}. The reason to prefer single-carrier (SC) modulation over OFDM in this paper is its superiority for systems having nonlinearities, such as quantized MIMO \cite{MessagePassing,MUD}, owing to its lower peak-to-average power ratio (PAPR) \cite{PAPR} and robustness to carrier-frequency-offset (CFO) errors \cite{SC-FDE}. Having lower PAPR is a critical property for systems with non-linear elements \cite{SCvsOFDM}, where the advantages of SC transmission over the multicarrier are demonstrated. Moreover, even for unquantized linear systems, there are many recent studies that motivate the use of SC especially for MIMO systems in mm-wave bands \cite{SCoptimality,mmWaveSC_Hybrid_mMIMO,GC,Anil}. In these studies, the mitigation of inter-symbol interference (ISI) is fulfilled via reduced complexity signal processing.

Analog beamformer design for JSDM framework is a critical task since inter-group interference should be mitigated at this stage. There are several papers that find unconstrained analog beamformers that take interference into account \cite{JSDM,JSDMmmWave,NewJSDM}. However, they do not suppress inter-group interference perfectly. In \cite{NewJSDM}, a weighted minimum mean square error (WMMSE) type digital beamformer is proposed to suppress residual interference. A near-optimal generalized eigenbeamformer (GEB), which has strong interference suppression capability, is proposed for SC transmission at mm-wave bands in \cite{GC,Anil}. However, none of these analog beamformers obey constant-modulus constraint. For two-stage beamforming concept, DFT beamformer and phase of unconstrained beamformers are considered as state-of-the-art constrained analog beamformer alternatives \cite{JSDM,DFT,TwoStage,Exploiting,Dynamic}.

On the other hand, constrained analog beamformer design for hybrid analog/digital systems is a well-studied research area \cite{LowComplexityHybrid,AltMin,WeiYu}. These papers use the instantaneous full dimensional channel state information (CSI) for beamformer design with fully connected arrays. In \cite{Exploiting}, two-stage beamforming approach is adopted where columns of the unconstrained analog beamformer are selected as eigenvectors of sum of CCMs of users. Furthermore, constrained analog beamformer is obtained via alternating minimization algorithm and a compensation matrix is applied in digital baseband to orthogonalize the overall analog beamformer. On the other hand, compared to fully connected arrays, partially connected ones require much less phase shifters at the expense of spectral efficiency. Compromise between spectral efficiency and energy consumption for fully connected arrays and fixed subarrays is extensively studied in \cite{AltMin,FixedPart}.

Recently, dynamic subarray design, which is a challenging problem, is studied where CCMs are used to obtain the optimal connection between radio frequency (RF) chains and antennas for partially connected arrays \cite{Dynamic,ChStat,DynamicWideband,DynamicWideband2,DynamicMISO}. In these papers, it is shown that dynamic subarrays outperform fixed ones in terms of several metrics such as spectral efficiency.\cite{Dynamic,ChStat,DynamicWideband,DynamicWideband2} consider a point-to-point hybrid analog/digital system whereas \cite{DynamicMISO} considers multiple users. For dynamic subarrays, connection change can be realized with switches which are shown to be energy-efficient components for mm-wave massive MIMO systems \cite{Switch1,Switch2}.

Table \ref{RelatedWork} summarizes the contributions of the existing work on constrained beamformer design. None of them are interested in constrained pre-beamformer (analog beamformer) design even for fully connected arrays where mitigation of interference is realized at this stage. That is, interference nulling capability of statistical pre-beamformers is ignored while considering constant-modulus constraint in prior work. Furthermore, dynamic subarray design is not considered for interference-aware pre-beamformers in the literature.

\begin{table*}[!h]
\small
\caption{Summary of existing work on constrained beamformer design with fully connected and partially connected arrays (fixed and dynamic subarrays)}
\vspace{-5mm}
\begin{center}\begin{tabular}{| >{\centering}m{28mm}|c|c|c|c|c|c|c|c|c|c|c|c|}
\hline
Papers & \cite{LowComplexityHybrid} & \cite{AltMin} & \cite{WeiYu} & \cite{JSDM,JSDMmmWave,NewJSDM} & \cite{GC,Anil} & \cite{Exploiting} & \cite{Dynamic} & \cite{FixedPart} & \cite{ChStat} & \cite{DynamicWideband,DynamicWideband2} & \cite{DynamicMISO} & This Work \\ \hline
Frequency-Selectivity & \xmark & \cmark & \xmark & \xmark & \cmark & \xmark & \cmark & \xmark & \xmark & \cmark & \xmark & \cmark \\ \hline
Multiuser & \xmark & \xmark & \cmark & \cmark & \cmark & \cmark & \xmark & \xmark & \xmark & \xmark & \cmark & \cmark \\ \hline
Two-Stage Beamforming (Spatial Statistical Pre-Beamformer) & \xmark & \xmark & \cmark & \cmark & \cmark & \cmark & \cmark & \cmark & \xmark & \xmark & \cmark & \cmark \\ \hline
Interference-Aware Pre-Beamformer & \xmark & \xmark & \xmark & \cmark & \cmark & \xmark & \xmark & \cmark & \xmark & \xmark & \xmark & \cmark \\ \hline
Constrained Analog Beamformer & \cmark & \cmark & \cmark &\xmark & \xmark & \cmark & \xmark & \cmark & \cmark & \cmark & \cmark & \cmark \\ \hline
Fully Connected Array & \cmark & \cmark & \cmark & \cmark & \cmark & \cmark & \cmark & \xmark & \xmark & \xmark & \xmark & \cmark \\ \hline
Fixed Subarray & \xmark & \cmark & \xmark & \xmark & \xmark & \xmark & \cmark & \cmark & \cmark & \cmark & \cmark & \cmark \\ \hline
Dynamic Subarray & \xmark & \xmark & \xmark & \xmark & \xmark & \xmark & \cmark & \xmark & \cmark & \cmark & \cmark & \cmark \\ \hline
\end{tabular}
\vspace{-5mm}
\label{RelatedWork}
\end{center}
\end{table*}

\subsection{Contributions}

In this paper, we design interference-aware analog beamformers for JSDM framework utilizing SC-FDE in uplink transmission. Our contributions in this paper are as follows:

\begin{itemize}

\item We firstly design a near-optimal unconstrained analog beamformer that considers interference. Unlike most prior work, we concentrate on analog beamformers which are updated with slowly varying CCMs. We try to maximize mutual information between the frequency domain received signal and intended group's signal in reduced dimension, i.e., after the analog beamformer. Although we employ a frequency-selective channel model, cost function turns out to be independent of frequency bin for spatial-narrowband arrays. Near-optimal unconstrained analog beamformer that maximizes the mutual information in reduced dimension coincides with GEB concept. This type of beamformers have strong interference capability and their optimality can be shown with respect to several criteria \cite{GC,Anil}.

\item We propose algorithms to obtain constant-modulus constrained approximations of GEB for both fully and partially connected arrays. We decompose the analog beamformer into two stages, namely, constrained analog beamformer and compensation matrix. Compensation matrix is applied in the digital baseband just before digital beamformers. Compensation matrix increases the degrees of freedom in the optimization problem which is formulated as minimization of the Euclidean distance between GEB and combination of constrained analog beamformer and compensation matrix. We obtain a constrained solution for fully connected arrays by utilizing an alternating minimization-based algorithm. Then, we present the algorithm to find the constrained solution for partially connected arrays with any fixed connection structure. Furthermore, we propose a low-complexity algorithm to find the optimal partially connected array structure that yields the maximum expected signal-to-interference-plus-noise ratio (SINR) in reduced dimension by exploiting the properties of GEB. The last problem is known as dynamic subarray design. To the authors' best knowledge this is the first dynamic subarray algorithm for slowly varying statistical analog beamformers in JSDM framework. In addition, we promoted the usage of linear minimum mean square error (LMMSE) type digital beamformers that consider the residual interference after constrained analog beamformers that cannot mitigate the inter-group interference completely. In this way, interference is taken into account in both analog and digital beamforming stages.

\item We provide a comprehensive analysis by using several performance measures; beampattern, spectral efficiency, outage capacity and accuracy of channel estimation in reduced dimension. Beampattern analysis of constrained analog beamformers, especially partially connected arrays, is lacking in the literature. Our analysis provides novel insights about the interference suppression capabilities of constrained analog beamformers.

\end{itemize}

In summary, we design interference-aware slowly varying analog beamformers in JSDM framework unlike most prior work as it can be seen from Table \ref{RelatedWork}. However, our design can be used in any framework with the presence of interference whose covariance matrix is known. In addition, proposed algorithms are evaluated with simulations by using the provided analysis tools. We consider a scenario where a group is assumed to be mobile so that its multipath components (MPCs) can be overlapped with angular regions of other groups which leads to increase in inter-group interference at certain situations. In this way, interference suppression capabilities of constrained analog beamformers can be observed. The results show that proposed fully connected array algorithm is superior to DFT beamformer and phase of GEB which are considered as state-of-the-art. Proposed algorithm attains the performance of GEB with moderate interference strength. Furthermore, proposed dynamic subarray algorithm outperforms commonly used fixed subarray structures.

\textit{Notations:} Scalars, column vectors and matrices are denoted by lowercase (e.g., $ x $), lower-case boldface (e.g., $ \textbf{x} $) and uppercase boldface (e.g., $ \textbf{X} $) letters, respectively. $ | x | $ and $ x^* $ are the magnitude and complex conjugate of scalar $ x $, respectively. $ \textbf{X}^T $, $ \textbf{X}^H $, $ \textbf{X}^{-1} $ and $ | \textbf{X} | $ represent the transpose, Hermitian, inverse and determinant of matrix $ \textbf{X} $, respectively. $ [ \textbf{X} ]_{(i,j)} $ is the entry of matrix $ \textbf{X} $ at $ i^{th} $ row and $ j^{th} $ column. $ [ \textbf{X} ]_{(i,:)} $ and $ [ \textbf{X} ]_{(:,j)} $ are used to extract $ i^{th} $ row and $ j^{th} $ column of matrix $ \textbf{X} $, respectively. $ \textbf{I}_{N} $ is the identity matrix with size $ N \times N $. $ \mathbb{E} \{ \cdot \} $ and $ \mathrm{Tr} \{ \cdot \} $ are the expectation and trace operators, respectively. $ \norm{\textbf{x}} $ and $ \norm{\textbf{X}}_F $ denote the Euclidean norm of vector $ \textbf{x} $ and Frobenius norm of matrix $ \textbf{X} $, respectively. $ \measuredangle( \cdot ) $ extracts the phase values of given input. $ \textrm{blkdiag} \{ \cdot \} $ is used to construct a block diagonal matrix with given inputs. $ \mathcal{CN}\big(\textbf{x},\textbf{R}\big) $ is a complex Gaussian random vector with mean $ \textbf{x} $ and covariance $ \textbf{R} $. $ P ( \cdot ) $ represents the probability value. $ \otimes $ denote the Kronecker product between two matrices. $ \delta_{ij} $ is the Kronecker delta function.

\section{System Model}
\label{sec:sysmod}

We consider an uplink SC-FDE system where the base station (BS) with $ M $ antennas serves $ K $ single-antenna users. Total number of RF chains is denoted by $ D $. Users are partitioned into $ G $ groups according to their channel covariance eigenspaces\footnote{The design of user grouping algorithms is out of scope of this paper. Efficient user-grouping procedures can be found in \cite{JSDM_Group,Joint_Group}.}. Number of users in group $ g $ is denoted by $ K_g $ and number of RF chains allocated to this group is $ D_g $. 

\subsubsection{Wideband SC-FDE Uplink Transmission}

Received signal vector $ \textbf{y}_n \in \mathbb{C}^{M \times 1} $ at the BS at $ n^{th} $ time instance is expressed as

\begin{equation}
\textbf{y}_n = \sum_{g=1}^{G} \sum_{l=0}^{L-1} \textbf{H}_{l}^{(g)} \textbf{x}_{(n-l)_N}^{(g)} + \textbf{n}_n, \label{y_n}
\end{equation}

\noindent
for $ n = 0,1,\dots,N-1 $, where block length is denoted by $ N $. In \eqref{y_n}, $ \textbf{H}_{l}^{(g)} \triangleq \big[ \textbf{h}_l^{(g_1)}, \dots, \textbf{h}_l^{(g_{K_g})} \big] \in \mathbb{C}^{M \times K_g} $ represent the channel matrix of group $ g $ at $ l^{th} $ MPC where the channel vector of user $ m $ in group $ g $ is denoted by $ \textbf{h}_l^{(g_m)} $ and the total number of MPCs is denoted by $ L $. A cyclic prefix with length larger than $ L $ is used to obtain circulant channel matrices while preventing the inter-block interference. Transmitted symbol vector of group $ g $ is defined as $ \textbf{x}_n^{(g)} \triangleq \big[ x_n^{(g_1)}, \dots, x_n^{(g_{K_g})} \big]^T \in \mathbb{C}^{K_g \times 1} $ with $ \mathbb{E} \big\{ x_n^{(g_m)} \big( x_{n'}^{(g'_{m'})} \big)^* \big\} = \frac{E_s^{(g)}}{K_g} \delta_{gg'} \delta_{mm'} \delta_{nn'} $ where $ x_n^{(g_m)} $ is the transmitted symbol of user $ m $ in group $ g $. Total transmit energy of group $ g $ is denoted by $ E_s^{(g)} $. Noise vector $ \textbf{n}_n $ is comprised of zero-mean circularly symmetric complex Gaussian random variables with variance $ N_0 $.

Channel vectors obey the spatially correlated Rayleigh channel model with $ \textbf{h}_l^{(g_m)} \sim \mathcal{CN}\big(\textbf{0},\textbf{R}_l^{(g_m)}\big) $ where $ \textbf{R}_l^{(g_m)} $ is the CCM of user $ m $ in group $ g $ at $ l^{th} $ MPC \cite{JSDM,JSDMmmWave,GC,Anil,NewJSDM,Exploiting}. User channel vectors are assumed to be mutually uncorrelated among MPCs and users which is expressed as

\begin{equation}
\mathbb{E}\bigg\{\textbf{h}_l^{(g_{m})} \Big[\textbf{h}_{l'}^{(g'_{m'})}\Big]^H \bigg\} = \textbf{R}_l^{(g_{m})} \delta_{gg'} \delta_{mm'} \delta_{ll'}. \label{E(h_l)}
\end{equation}

Considering the one-ring scattering model \cite{JSDM,JSDMmmWave,Exploiting}, CCM of a user at $ l^{th} $ MPC is defined as

\begin{equation}
\textbf{R}_l^{(g_m)} \triangleq \gamma^{(g_m)} \int_{\mu_l^{(g_m)} - \frac{\Delta_l^{(g_m)}}{2}}^{\mu_l^{(g_m)} + \frac{\Delta_l^{(g_m)}}{2}} \rho_l^{(g_m)}(\theta) \textbf{u}(\theta) \textbf{u}(\theta)^H d\theta, \label{R_l}
\end{equation}

\noindent
where $ \sqrt{\gamma^{(g_m)}} $ is the channel gain of user $ m $ in group $ g $ satisfying $ \sum_{l=0}^{L-1} \mathrm{Tr} \big\{\textbf{R}_l^{(g_m)} \big\} =  \gamma^{(g_m)}$ relation. For $ l^{th} $ MPC of user $ m $ in group $ g $, angular power profile function, mean angle of arrival (AoA) and angular spread (AS) are denoted by $ \rho_l^{(g_m)}(\theta) $, $ \mu_l^{(g_m)} $ and $ \Delta_l^{(g_m)} $, respectively. Unit norm steering vector $ \textbf{u}(\theta) \in \mathbb{C}^{M \times 1} $ of uniform linear array (ULA) with half the wavelength spacing corresponding to azimuth angle $ \theta $ is defined as

\begin{equation}
\textbf{u}(\theta) \triangleq \frac{1}{\sqrt{M}} \big[1 \; e^{j \pi sin(\theta)} \cdots \: e^{j (M-1) \pi sin(\theta)} \big]^T. \label{u_theta}
\end{equation}

Although we considered ULA, formulations in this paper can be used for other array structures. The BS observes a sparse angle-delay profile at mm-wave frequencies \cite{SPmmWave}. That is, users have a few active MPCs with narrow angular spread which makes user-grouping feasible. In this paper, it is assumed that users in the same group has the same active MPCs with similar mean AoAs constituting MPC clusters.

\subsubsection{Two-Stage Beamforming Concept}

\begin{figure*}[!h]
\centering
\includegraphics[width=1\textwidth]{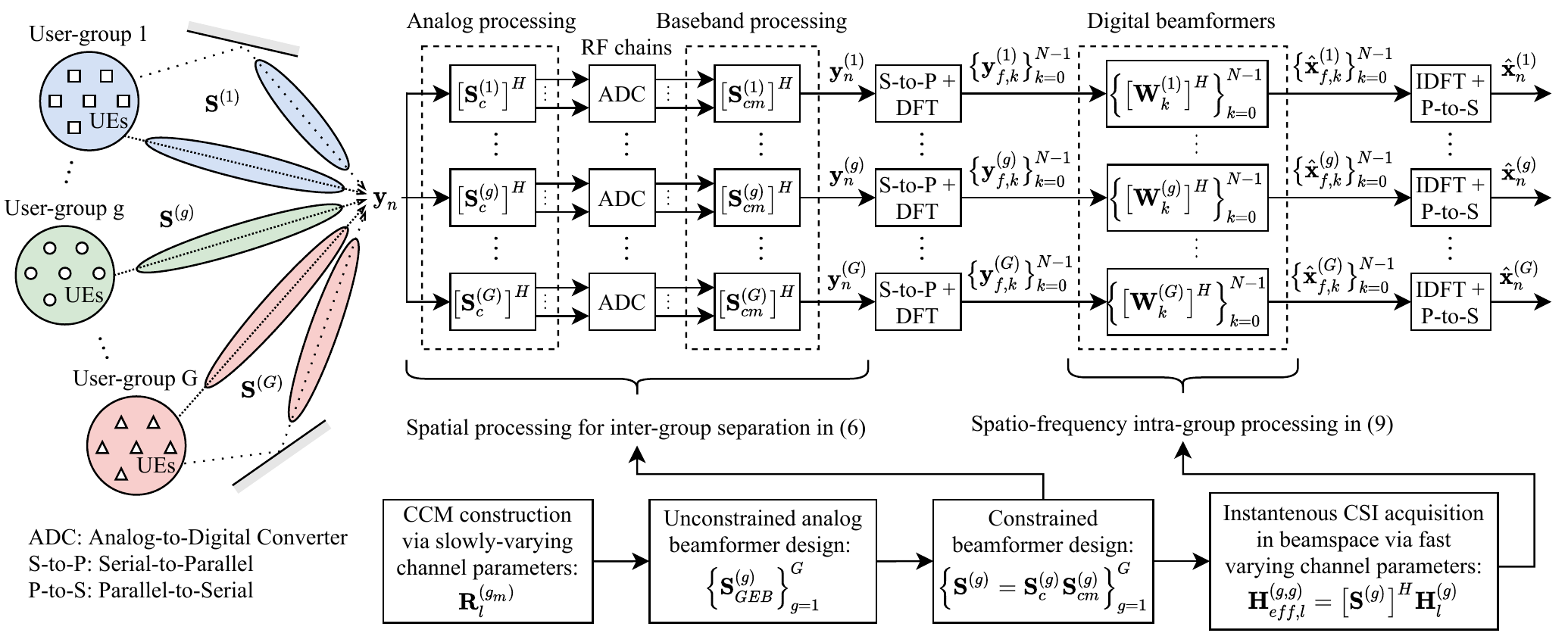}
\caption{Block diagram of the overall system design with constrained analog beamformers}
\label{BlockDiagram}
\vspace{-5mm}
\end{figure*}

This paper considers a two-stage beamforming architecture based on JSDM where there is an analog beamformer $ \textbf{S}^{(g)} \in \mathbb{C}^{M \times D_g} $ for each group suppressing channels of users in other groups. Let the received signal in \eqref{y_n} is rewritten as

\begin{equation}
\textbf{y}_n = \underbrace{ \sum_{l=0}^{L-1} \textbf{H}_{l}^{(g)} \textbf{x}_{(n-l)_N}^{(g)} }_{ \textbf{s}_n^{(g)} \text{: Intra-Group Signals} } + \underbrace{ \sum_{g' \neq g} \sum_{l=0}^{L-1} \textbf{H}_{l}^{(g')} \textbf{x}_{(n-l)_N}^{(g')} + \textbf{n}_n }_{ \boldsymbol{\eta}_n^{(g)} \text{: Inter-Group Signals + Noise Terms} }, \label{y_n2}
\end{equation}

\noindent
where intra-group and inter-group signals are separated for group $ g $. Then, signal vector of group $ g $ after its analog beamformer can be expressed as

\begin{multline}\label{y_n_g}
\tilde{\textbf{y}}_n^{(g)} = \big[ \textbf{S}^{(g)} \big]^H \textbf{y}_n = \underbrace{  \sum_{l=0}^{L-1} \textbf{H}_{eff,l}^{(g,g)} \textbf{x}_{(n-l)_N}^{(g)} }_{ \tilde{\textbf{s}}_{n}^{(g)} \text{: Intra-Group Signals in RD} } \\ + \underbrace{ \sum_{g' \neq g} \sum_{l=0}^{L-1} \textbf{H}_{eff,l}^{(g,g')} \textbf{x}_{(n-l)_N}^{(g')} + \big[ \textbf{S}^{(g)} \big]^H \textbf{n}_n }_{ \tilde{\boldsymbol{\eta}}_{n}^{(g)} \text{: Inter-Group Signals + Noise Terms in RD} } ,
\end{multline}

\noindent
where effective channel matrices at $ l^{th} $ MPC are denoted by $ \textbf{H}_{eff,l}^{(g,g')} \triangleq \big[ \textbf{S}^{(g)} \big]^H \textbf{H}_{l}^{(g')} \in \mathbb{C}^{D_g \times K_{g'}} $. Intra-group signals and sum of inter-group signals and noise in reduced dimension (RD) are denoted by $ \tilde{\textbf{s}}_{n}^{(g)} = \big[ \textbf{S}^{(g)} \big]^H \textbf{s}_n^{(g)} \in \mathbb{C}^{D_g \times 1} $ and $ \tilde{\boldsymbol{\eta}}_{n}^{(g)} = \big[ \textbf{S}^{(g)} \big]^H \boldsymbol{\eta}_n^{(g)} \in \mathbb{C}^{D_g \times 1} $, respectively.

In this paper, frequency domain equalization (FDE) is employed to mitigate ISI caused by SC transmission. Since FDE is employed, DFT operation is applied to the sequence $ \big\{ \tilde{\textbf{y}}_n^{(g)} \big\}_{n=0}^{N-1} $. Using the circularity in \eqref{y_n_g}, one can obtain the signal at the $ k^{th} $ frequency bin after the DFT operation as

\begin{equation}
\tilde{\textbf{y}}_{f,k}^{(g)} = \underbrace{ \textbf{\Lambda}_{eff,k}^{(g,g)} \textbf{x}_{f,k}^{(g)} }_{ \tilde{\textbf{s}}_{f,k}^{(g)} } + \underbrace{ \sum_{g' \neq g} \textbf{\Lambda}_{eff,k}^{(g,g')} \textbf{x}_{f,k}^{(g')} + \tilde{\textbf{n}}_{f,k}^{(g)} }_{ \tilde{\boldsymbol{\eta}}_{f,k}^{(g)} }, \label{Y_k_g}
\end{equation}

\noindent
for $ k = 0,1,\dots,N-1 $, where the DFT of effective channel matrices are denoted by $ \textbf{\Lambda}_{eff,k}^{(g,g')} = \sum_{l=0}^{L-1} \textbf{H}_{eff,l}^{(g,g')} e^{-j \frac{2 \pi}{N} kl} $. In \eqref{Y_k_g}, $ \tilde{\textbf{y}}_{f,k}^{(g)} $, $ \tilde{\textbf{s}}_{f,k}^{(g)} $, $ \textbf{x}_{f,k}^{(g)} $, $ \tilde{\boldsymbol{\eta}}_{f,k}^{(g)} $ and $ \tilde{\textbf{n}}_{f,k}^{(g)} $ are normalized DFTs of vector sequences $ \tilde{\textbf{y}}_n^{(g)} $, $ \tilde{\textbf{s}}_n^{(g)} $, $ \textbf{x}_n^{(g)} $, $ \tilde{\boldsymbol{\eta}}_{n}^{(g)} $ and $ \tilde{\textbf{n}}_n^{(g)} = \big[ \textbf{S}^{(g)} \big]^H \textbf{n}_n $, respectively. Note that normalized DFT of any vector sequence $ \{ \textbf{v_n} \}_{n=0}^{N-1} $ can be calculated as

\begin{equation}
\textbf{v}_{f,k} \triangleq \frac{1}{\sqrt{N}} \sum_{n=0}^{N-1} \textbf{v}_n e^{-j \frac{2 \pi}{N} kn}, \quad k = 0,1,\dots,N-1. \label{V_k}
\end{equation}

Digital beamformer is used to separate intra-group signals of users in JSDM framework. Since we employed FDE structure, digital beamformer is applied in frequency domain. Let $ \textbf{W}_k^{(g)} $ be the digital beamformer of group $ g $ for $ k^{th} $ frequency bin. Then, estimate of $ \textbf{x}_{f,k}^{(g)} $ which is found by using \eqref{Y_k_g} and inverse DFT of the estimates are expressed as follows:

\begin{equation}
\hat{\textbf{x}}_{f,k}^{(g)} = \big[ \textbf{W}_k^{(g)} \big]^H \tilde{\textbf{y}}_{f,k}^{(g)}, \label{X_k_g}
\end{equation}

\begin{equation}
\hat{\textbf{x}}_n^{(g)} = \frac{1}{\sqrt{N}} \sum_{k = 0}^{N-1} \hat{\textbf{x}}_{f,k}^{(g)} e^{j \frac{2 \pi}{N} kn} \label{x_n_g}
\end{equation}

\noindent
for $ k,n = 0,1,\dots,N-1 $. Finally, time-domain estimates $ \hat{\textbf{x}}_n^{(g)} \triangleq \big[ \hat{x}_n^{(g_1)}, \dots, \hat{x}_n^{(g_{K_g})} \big]^T $ are used to demodulate the transmitted symbols, where $ \hat{x}_n^{(g_m)} $ is the estimate of $ x_n^{(g_m)} $. Block diagram summarizing the overall two-stage beamforming architecture described in this section is given in Fig. \ref{BlockDiagram}. Note that analog beamforming stage is given for the constrained beamformer design that will be introduced in Section \ref{sec:constrained}.

\section{Unconstrained Hybrid Massive MIMO Beamformer Design}

In JSDM framework, analog beamformers are designed by using CCMs that are assumed to be known perfectly in this paper\footnote{CCM estimation is out of scope of this paper. However, there are several papers proposing efficient methods for CCM estimation \cite{CCM_Caire1,CCM_Gao,CCM_Caire2,CCM_GMG}. It can be assumed that CCMs can be learned by using the methods mentioned in these papers. It is important to note that CCMs are slowly varying parameters which is the reason why they do not need to be updated frequently. Hence, CCM estimation overhead does not cause a significant increase in overall system complexity.}. Assuming inter-group interference is completely mitigated at analog beamforming stage, digital processing of each group is implemented independently. In this section, analog and digital beamformer designs are considered. 

\subsection{Optimal Analog Beamformer Design: Generalized Eigenbeamformer (GEB)}
\label{sec:analog}

In this section, we find a near-optimal unconstrained analog beamformer structure for JSDM framework where frequency-selectivity of channels are considered. We approach the analog beamformer design as a dimension reduction problem. That is, we aim to find a near-optimal unconstrained analog beamformer, which results in a linear transformation, that preserves the mutual information in reduced dimension. The cost function that needs to be maximized in reduced dimension is the mutual information between $ \tilde{\textbf{y}}_{f,k}^{(g)} $ and $ \tilde{\textbf{s}}_{f,k}^{(g)} $ in \eqref{Y_k_g} which can be expressed as

\begin{equation}
I \Big( \tilde{\textbf{s}}_{f,k}^{(g)}; \tilde{\textbf{y}}_{f,k}^{(g)} \Big) = \mathrm{log}_2 \bigg( \Big\lvert \textbf{I}_{D_g} + \Big[ \textbf{R}_{\tilde{\boldsymbol{\eta}}_{f}}^{(g)} \Big]^{-1} \textbf{R}_{\tilde{\textbf{s}}_{f}}^{(g)} \Big\rvert \bigg), \label{I}
\end{equation}

\noindent
if $ \tilde{\textbf{y}}_{f,k}^{(g)} $ and $ \tilde{\textbf{s}}_{f,k}^{(g)} $ are assumed to be Gaussian random variables. Covariance matrices of $ \tilde{\textbf{s}}_{f,k}^{(g)} $ and $ \tilde{\boldsymbol{\eta}}_{f,k}^{(g)} $ are denoted by $ \textbf{R}_{\tilde{\textbf{s}}_{f}}^{(g)} = \mathbb{E} \Big\{ \tilde{\textbf{s}}_{f,k}^{(g)} \big[ \tilde{\textbf{s}}_{f,k}^{(g)} \big]^H \Big\} $ and $ \textbf{R}_{\tilde{\boldsymbol{\eta}}_{f}}^{(g)} = \mathbb{E} \Big\{ \tilde{\boldsymbol{\eta}}_{f,k} \big[ \tilde{\boldsymbol{\eta}}_{f,k} \big]^H \Big\} $ for $ k = 0,1,\dots,N-1 $, respectively. Note that these covariance matrices are independent of frequency bin index $ k $ according to Appendix \ref{cov_freq}\footnote{In this paper, although frequency-wideband effect is taken into account, spatial-wideband (beam-squint) effect is ignored as bandwidth is assumed to be not that wide \cite{SpatialWideband}. If a spatial-wideband system was considered, spatial covariance matrices of signals would depend on the frequency bin index. However, proposed analog beamforming could still be used either by using the mean spatial covariance matrices, which are averaged over frequency bins, to find the unconstrained analog beamformer as in \cite{DynamicWideband,HybridBeamSquint} (which do not consider the interference rejection capability of pre-beamformer) or by finding the unconstrained analog beamformer with spatial-narrowband assumption and applying a beam-squint compensation in digital domain \cite{BeamSquintComp}.}. Hence, there will be a single cost function to maximize. Furthermore, it is easy to see that maximization of mutual information is equivalent to maximization of the determinant inside the logarithm in \eqref{I}. If we replace $ \textbf{R}_{\tilde{\textbf{s}}_{f}}^{(g)} $ and $ \textbf{R}_{\tilde{\boldsymbol{\eta}}_{f}}^{(g)} $ with their expressions given in Appendix \ref{cov_freq}, the cost function that needs to be maximized under the constraint that $ \textbf{S}^{(g)} $ is a full column matrix can be written as

\begin{equation}
\begin{aligned}
& \underset{ \textbf{S}^{(g)} }{\text{maximize}}
& & \! \! \! \big\lvert \textbf{I}_{D_g} \! \! + \!\big( [ \textbf{S}^{(g)} ]^H \textbf{R}_{\boldsymbol{\eta}}^{(g)} \textbf{S}^{(g)} \big)^{-1} \big( [ \textbf{S}^{(g)} ]^H \textbf{R}_{\textbf{s}}^{(g)} \textbf{S}^{(g)} \big) \big\rvert \\
& \text{subject to}
& & \! \! \! \text{rank} \big( \textbf{S}^{(g)} \big) = D_g
\end{aligned}  \label{det}
\end{equation}

\noindent
where covariance matrix of intra-group signals and covariance matrix of sum of inter-group signals and noise terms in \eqref{y_n2} are expressed as $ \mathbb{E} \Big\{ \textbf{s}_{n}^{(g)} \big[ \textbf{s}_{n'}^{(g)} \big]^H \Big\} = \textbf{R}_{\textbf{s}}^{(g)} \delta_{nn'} $ and $ \mathbb{E} \Big\{ \boldsymbol{\eta}_{n}^{(g)} \big[ \boldsymbol{\eta}_{n'}^{(g)} \big]^H \Big\} = \textbf{R}_{\boldsymbol{\eta}}^{(g)} \delta_{nn'} $, respectively. Considering the uncorrelated nature of transmitted symbols among users and time in addition to the uncorrelated channel vectors in \eqref{E(h_l)}, these two covariance matrices are computed as

\begin{equation}
\textbf{R}_{\textbf{s}}^{(g)} = \frac{E_s^{(g)}}{K_{g}} \sum_{m=1}^{K_g} \sum_{l=1}^{L-1} \textbf{R}_l^{(g_m)}, \label{R_s}
\end{equation}

\begin{equation}
\textbf{R}_{\boldsymbol{\eta}}^{(g)} = \sum_{g' \neq g} \frac{E_s^{(g')}}{K_{g'}} \sum_{m=1}^{K_{g'}} \sum_{l=1}^{L-1} \textbf{R}_l^{(g'_m)} + N_0 \textbf{I}_{M}. \label{R_eta}
\end{equation}

Analog beamformer $ \bar{\textbf{S}}^{(g)} $ that maximizes the cost function is found as \cite{GMG}

\begin{equation}
\bar{\textbf{S}}^{(g)} = \big[ \textbf{v}_1, \textbf{v}_2, \cdots, \textbf{v}_{D_g} \big],\label{S_g}
\end{equation}

\noindent
where $ \textbf{v}_i $ is the eigenvector corresponding to the $ i^{th} $ most dominant eigenvalue (i.e., $ \lambda_i $) of the generalized eigenvalue problem which is defined as

\begin{equation}
\textbf{R}_{\textbf{s}}^{(g)} \textbf{v}_i = \lambda_i \textbf{R}_{\boldsymbol{\eta}}^{(g)} \textbf{v}_i, \quad i = 1,2,\dots,M.\label{GEB}
\end{equation}

This solution is known as generalized eigenbeamformer which creates deep nulls for interfering signals. It is meaningful to use $ \textbf{R}_{\textbf{s}}^{(g)} $, which covers CCMs of all MPCs of intended group, in the generalized eigenvalue problem in \eqref{GEB} since SC-FDE is employed. In order to obtain orthonormalized analog beamformers, QR decomposition of GEB in \eqref{S_g} is found as $ \bar{\textbf{S}}^{(g)} = \textbf{S}^{(g)} \bar{\textbf{R}}^{(g)} $ which is permissible according to Lemma \ref{LEM1}. Equivalent analog beamformer $ \textbf{S}^{(g)} $ found by QR decomposition is used as GEB for the rest of the paper.

\begin{lem}
Using $ \textbf{S}^{(g)} \textbf{A} $ instead of $ \textbf{S}^{(g)} $ does not change the cost function where $ \textbf{A} \in \mathbb{C}^{D_g \times D_g} $ is any invertible matrix. \label{LEM1}
\end{lem}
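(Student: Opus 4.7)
The plan is to substitute $\textbf{S}^{(g)}\textbf{A}$ directly into the cost function from \eqref{det} and show that the determinant collapses back to its original value using only elementary properties of Hermitian quadratic forms and determinants. First, I would note that for any matrix $\textbf{R}$, the substitution $\textbf{S}^{(g)} \mapsto \textbf{S}^{(g)}\textbf{A}$ converts $[\textbf{S}^{(g)}]^H \textbf{R}\, \textbf{S}^{(g)}$ into $\textbf{A}^H \big([\textbf{S}^{(g)}]^H \textbf{R}\, \textbf{S}^{(g)}\big) \textbf{A}$. Applying this observation in parallel to both $\textbf{R}_{\textbf{s}}^{(g)}$ and $\textbf{R}_{\boldsymbol{\eta}}^{(g)}$ gives an identical $\textbf{A}^H(\cdot)\textbf{A}$ sandwich around each of the two quadratic forms appearing inside the determinant. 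Invertibility of the interference-plus-noise term is preserved because $\textbf{A}$ is invertible by hypothesis.

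Next, I would use the inverse identity
$$\big(\textbf{A}^H [\textbf{S}^{(g)}]^H \textbf{R}_{\boldsymbol{\eta}}^{(g)} \textbf{S}^{(g)} \textbf{A}\big)^{-1} = \textbf{A}^{-1}\big([\textbf{S}^{(g)}]^H \textbf{R}_{\boldsymbol{\eta}}^{(g)} \textbf{S}^{(g)}\big)^{-1}\textbf{A}^{-H},$$
and multiply it on the right by the transformed signal term $\textbf{A}^H [\textbf{S}^{(g)}]^H \textbf{R}_{\textbf{s}}^{(g)} \textbf{S}^{(g)} \textbf{A}$. The adjacent $\textbf{A}^{-H}\textbf{A}^H$ factor telescopes to the identity, leaving $\textbf{A}^{-1} \textbf{M}\, \textbf{A}$, where $\textbf{M}$ denotes the matrix product that appeared in the original cost function before the substitution.

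Finally, I would factor $\textbf{I}_{D_g} + \textbf{A}^{-1} \textbf{M}\, \textbf{A} = \textbf{A}^{-1}(\textbf{I}_{D_g} + \textbf{M})\,\textbf{A}$ and invoke the multiplicativity of the determinant together with $|\textbf{A}^{-1}|\,|\textbf{A}| = 1$ to conclude that the cost function is unchanged. There is no real obstacle here: the entire argument is driven by the single observation that the congruence by $\textbf{A}$ is applied \emph{uniformly} to both quadratic forms, so it induces only a similarity transformation of the internal matrix, and $|\cdot|$ is similarity-invariant. The only point that requires a brief check is that $\textbf{A}^{-H}$ is well defined, which follows immediately from the invertibility assumption on $\textbf{A}$.
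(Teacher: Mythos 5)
Your proposal is correct and follows essentially the same route as the paper: both arguments observe that the uniform congruence by $\textbf{A}$ turns the matrix inside the determinant into a similarity transform $\textbf{A}^{-1}\textbf{M}\textbf{A}$, which leaves the cost unchanged. You simply make the final step explicit via $\big\lvert \textbf{A}^{-1}(\textbf{I}_{D_g}+\textbf{M})\textbf{A} \big\rvert = \lvert \textbf{I}_{D_g}+\textbf{M} \rvert$, whereas the paper appeals to the similarity invariance of the spectrum.
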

\begin{proof}
See Appendix \ref{proof_lem1}.
\end{proof}

\subsection{Digital Beamformer Design}

As digital processing of each user group is independent from each other, only intra-group effective channels of users are required for digital beamformer design. Effective channel vectors are learned in time division duplexing (TDD) mode as in Section \ref{sec:ChannelEstimation}. Intra-group channel estimation in reduced dimension reduces the channel estimation overhead significantly. In this section, ZF and LMMSE type digital beamformers are introduced.

\subsubsection{ZF Type Digital Beamformer}

Digital beamformers can be designed in several different ways. If analog beamformers are properly designed, inter-group interference terms in \eqref{Y_k_g} will be close to zero. In this case, classical zero-forcing (ZF) approach can be adopted. ZF type digital beamformer of group $ g $ is expressed as

\begin{equation}
\textbf{W}_k^{(g)} = \boldsymbol{\Lambda}_{eff,k}^{(g,g)} \Big( \big[ \boldsymbol{\Lambda}_{eff,k}^{(g,g)} \big]^H \boldsymbol{\Lambda}_{eff,k}^{(g,g)} \Big)^{-1}. \label{ZF}
\end{equation}

\subsubsection{LMMSE Type Digital Beamformer}

Constrained beamformers that will be introduced in Section \ref{sec:constrained} do not necessarily suppress inter-group interference. In this case, ZF type digital beamformers could perform poorly even at high signal-to-noise ratio (SNR) regime. Thus, we adopt LMMSE type beamformer to suppress residual inter-group interference in the reduced dimension. LMMSE type beamformer of group $ g $ is expressed as

\begin{equation}
\textbf{W}_k^{(g)} = \textbf{R}_{\tilde{\textbf{y}}_{f,k}^{(g)}}^{-1} \textbf{R}_{\tilde{\textbf{y}}_{f,k}^{(g)} \textbf{x}_{f,k}^{(g)}}, \label{MMSE}
\end{equation}

\noindent
where covariance matrix of $ \tilde{\textbf{y}}_{f,k}^{(g)} $ and covariance matrix between $ \tilde{\textbf{y}}_{f,k}^{(g)} $ and $ \textbf{x}_{f,k}^{(g)} $ are denoted by $ \textbf{R}_{\tilde{\textbf{y}}_{f,k}^{(g)}} $ and $ \textbf{R}_{\tilde{\textbf{y}}_{f,k}^{(g)} \textbf{x}_{f,k}^{(g)}} $, respectively. Using the fact that transmitted symbol vectors of users are uncorrelated and independent from the noise vector, these covariance matrices are computed as

\begin{equation}
\textbf{R}_{\tilde{\textbf{y}}_{f,k}^{(g)} \textbf{x}_{f,k}^{(g)}} = \mathbb{E} \Big\{ \tilde{\textbf{y}}_{f,k}^{(g)} \big[ \textbf{x}_{f,k}^{(g)} \big]^H \Big\} = \frac{E_s^{(g)}}{K_g} \boldsymbol{\Lambda}_{eff,k}^{(g,g)},
\label{R_Y_k_X_k}
\end{equation}

\begin{equation}
\textbf{R}_{\tilde{\textbf{y}}_{f,k}^{(g)}} = \mathbb{E} \Big\{ \tilde{\textbf{y}}_{f,k}^{(g)} \big[ \tilde{\textbf{y}}_{f,k}^{(g)} \big]^H \Big\} = \frac{E_s^{(g)}}{K_g} \boldsymbol{\Lambda}_{eff,k}^{(g,g)} \big[ \boldsymbol{\Lambda}_{eff,k}^{(g,g)} \big]^H \! \! + \textbf{R}_{ \tilde{\boldsymbol{\eta}}_{f}}^{(g)}, \label{R_Y_k}
\end{equation}

\noindent
for given frequency-domain intra-group channel matrix $ \boldsymbol{\Lambda}_{eff,k}^{(g,g)} $. Recall that intra-group channels are available at the BS whereas inter-group channel matrices $ \boldsymbol{\Lambda}_{eff,k}^{(g,g')} $ are not known. That is why inter-group channel matrices are treated as random matrices in \eqref{R_Y_k} and covariance matrix of sum of inter-group interference and noise terms in reduced dimension $ \textbf{R}_{ \tilde{\boldsymbol{\eta}}_{f}}^{(g)} $ whose expression is given in Appendix \ref{cov_freq} is used. Consequently, instantaneous channel estimation overhead is the same for both ZF and LMMSE type digital beamformers.

\section{Constrained Analog Beamformer Design}
\label{sec:constrained}

In this paper, optimality is imposed at the analog beamforming stage. Near-optimal GEB introduced in Section \ref{sec:analog} is an unconstrained beamformer, i.e., it does not obey constant-modulus constraint. However, phase-shifter networks are preferred for the analog beamforming stage in practice \cite{SPmmWave}. Hence, our goal is to find a constant-modulus approximation to near-optimal GEB of each group. In addition, we provide a DFT beamformer structure which is obtained by using AoAs belonging to MPC clusters of user groups. In this section, near-optimal GEB of group $ g $ found in \eqref{S_g} in Section \ref{sec:analog} and its constant-modulus constrained equivalent are denoted by $ \textbf{S}_{GEB}^{(g)} \in \mathbb{C}^{M \times D_g}$ and $ \textbf{S}_{c}^{(g)} \in \mathbb{C}^{M \times D_g} $, respectively.

\subsection{Fully Connected Array}

\subsubsection{DFT Beamformer}

DFT beamformer, which will be used for comparison purposes, is widely used in the literature \cite{JSDM,DFT,TwoStage}. It is not an interference aware beamformer. In other words, it only considers the AoAs of MPC clusters in the intended group and does not consider the channels or signals of other groups. Let $ \textbf{Q} \in \mathbb{C}^{M \times M} $ be the normalized DFT matrix with entries $ [ \textbf{Q} ]_{(m,n)} \triangleq \frac{1}{\sqrt{M}} e^{-j \frac{2\pi}{M}\pi mn} $ for $ m,n = 0,1,\dots,M-1 $. Then, DFT beamformer of group $ g $ is defined as

\begin{equation}
\textbf{S}_{c}^{(g)} \triangleq \textbf{Q} \boldsymbol{\Gamma}^{(g)}, \label{DFT}
\end{equation}

\noindent
where $ \boldsymbol{\Gamma}^{(g)} = [\textbf{e}_{g,1}, \textbf{e}_{g,2}, \dots, \textbf{e}_{g,{D_g}}] \in \mathbb{Z}^{M \times D_g} $ is used as the column selection matrix for group $ g $. Columns of $ \boldsymbol{\Gamma}^{(g)} $ are denoted by $ \textbf{e}_{i} \in \mathbb{Z}^{M \times 1} $ which is a binary vector whose $ i^{th} $ entry is one whereas others are zero. This vector selects the columns of the DFT matrix for the analog beamformer. For any group, firstly, closest DFT vectors to mean AoAs of active MPCs are selected. If the number of available RF chains is greater than the number of active MPCs, which is the desired case, remaining columns are selected from the adjacent DFT vectors to the ones closest to mean AoAs. Analog beamformer used in practice is expressed as $ \textbf{S}^{(g)} = \textbf{S}_{c}^{(g)} $.

\subsubsection{Phase Extraction (PE)}

Constrained analog beamformer $ \textbf{S}_{c}^{(g)} $ has only constant-modulus constraint for the fully connected structure, i.e., RF chains are connected to all the antennas. Since our goal is to find the best approximation to GEB, an optimization problem can be constructed as follows:

\begin{equation}\label{problem1}
\begin{aligned}
& \underset{\textbf{S}_{c}^{(g)}}{\text{minimize}}
& & \norm{\textbf{S}_{GEB}^{(g)} - \textbf{S}_{c}^{(g)} }_F^2 \\
& \text{subject to}
& & \big| [\textbf{S}_{c}^{(g)}]_{(i,j)} \big| = \frac{1}{\sqrt{M}}, \forall i,j,
\end{aligned}
\end{equation}

\noindent
which has the simple phase extraction solution where optimum constrained analog beamformer can be expressed as $ \textbf{S}_{c}^{(g)\star}  = \frac{1}{\sqrt{M}} e^{j\measuredangle \big( \textbf{S}_{GEB}^{(g)} \big)} $. This solution is used in the literature due to its simplicity. If multiuser MIMO without user-grouping is considered, phase extraction offers acceptable performance \cite{Exploiting,Dynamic}. However, deep nulls introduced by GEB may disappear which results in increase in inter-group interference for JSDM. Similar to DFT beamformer, practical analog beamformer is expressed as $ \textbf{S}^{(g)} = \textbf{S}_{c}^{(g)} $.

\subsubsection{Phase Extraction with Alternating Minimization (PE-AM)}

In order to improve the accuracy of approximation, we introduce a compensation matrix $ \textbf{S}_{cm}^{(g)} \in \mathbb{C}^{D_g \times D_g} $ which will be applied in the digital baseband after the constrained analog beamformer $ \textbf{S}_{c}^{(g)} $. This can be considered as dividing the analog beamformer into two stages. A new optimization problem with the compensation matrix can be constructed as

\begin{equation}\label{problem2}
\begin{aligned}
& \underset{\textbf{S}_{c}^{(g)}, \textbf{S}_{cm}^{(g)}}{\text{minimize}}
& & \norm{\textbf{S}_{GEB}^{(g)} - \textbf{S}_{c}^{(g)} \textbf{S}_{cm}^{(g)}}_F^2 \\
& \text{subject to}
& & \big| [\textbf{S}_{c}^{(g)}]_{(i,j)} \big| = \frac{1}{\sqrt{M}}, \forall i,j.
\end{aligned}
\end{equation}

Optimization problem given above can be solved with manifold optimization-based algorithms which have high complexity \cite{AltMin}. Columns of near-optimal GEB are mutually orthonormal whereas there is no constraint on the compensation matrix. Although optimum structure of the compensation matrix is not known, orthogonality constraint on compensation matrix is imposed to obtain a simplified optimization problem. It can be shown that the cost function in \eqref{problem2} is upper-bounded by $ \norm{\textbf{S}_{GEB}^{(g)} \big[ \textbf{S}_{cm}^{(g)} \big]^H - \textbf{S}_{c}^{(g)}}_F^2 $ due to the orthogonality constraint of the compensation matrix. Modified optimization problem with this upper bound is expressed as

\begin{equation}\label{problem3}
\begin{aligned}
& \underset{\textbf{S}_{c}^{(g)}, \textbf{S}_{cm}^{(g)}}{\text{minimize}}
& & \norm{\textbf{S}_{GEB}^{(g)} \big[ \textbf{S}_{cm}^{(g)} \big]^H - \textbf{S}_{c}^{(g)}}_F^2 \\
& \text{subject to}
& & \big| [\textbf{S}_{c}^{(g)}]_{(i,j)} \big| = \frac{1}{\sqrt{M}}, \forall i,j; \quad \! \! \! \! \big[ \textbf{S}_{cm}^{(g)} \big]^H \textbf{S}_{cm}^{(g)} = \textbf{I}_{D_g}.
\end{aligned}
\end{equation}

Note that $ \textbf{S}_{GEB}^{(g)} \big[ \textbf{S}_{cm}^{(g)} \big]^H $ is an equivalent generalized eigenbeamformer according to Lemma \ref{LEM1} since $ \big[ \textbf{S}_{cm}^{(g)} \big]^H $ is invertible. Given $ \textbf{S}_{cm}^{(g)} $, optimum solution for the constrained analog beamformer $ \textbf{S}_{c}^{(g)\star} $ is a phase extraction of the equivalent beamformer that is expressed as

\begin{equation}
\textbf{S}_{c}^{(g)\star} = \frac{1}{\sqrt{M}} e^{j \measuredangle \big( \textbf{S}_{GEB}^{(g)} [ \textbf{S}_{cm}^{(g)} ]^H \big)}. \label{Full_S_c}
\end{equation}

For given constrained analog beamformer $ \textbf{S}_{c}^{(g)} $, optimization problem given in \eqref{problem3} reduces to

\begin{equation}\label{problem4}
\begin{aligned}
& \underset{\textbf{S}_{cm}^{(g)}}{\text{minimize}}
& & \norm{\textbf{S}_{GEB}^{(g)} \big[ \textbf{S}_{cm}^{(g)} \big]^H - \textbf{S}_{c}^{(g)}}_F^2 \\
& \text{subject to}
& & \big[ \textbf{S}_{cm}^{(g)} \big]^H \textbf{S}_{cm}^{(g)} = \textbf{I}_{D_g}.
\end{aligned}
\end{equation}

\noindent
which is similar to orthogonal Procrustes problem \cite{OPP}. Optimal compensation matrix can be found as

\begin{equation}
\textbf{S}_{cm}^{(g)\star} = \textbf{V}^{(g)} \big[ \textbf{U}^{(g)} \big]^H, \label{Full_S_cm}
\end{equation}

\noindent
where following singular value decomposition (SVD) is used, $ \big[ \textbf{S}_{GEB}^{(g)} \big]^H \textbf{S}_{c}^{(g)} = \textbf{U}^{(g)} \boldsymbol{\Sigma}^{(g)} \big[ \textbf{V}^{(g)} \big]^H $ \cite{AltMin}. Alternating minimization algorithm is adopted in order to obtain optimum solutions where $ \textbf{S}_{c}^{(g)\star} $ is calculated for given $ \textbf{S}_{cm}^{(g)} $ and $ \textbf{S}_{cm}^{(g)\star} $ is calculated by using given $ \textbf{S}_{c}^{(g)} $ repeatedly until convergence is achieved. Summary of the described solution is given in Algorithm \ref{PE-AM}.

\begin{algorithm}
\caption{PE-AM}\label{PE-AM}
\begin{algorithmic}[1]

\Require $ \textbf{S}_{GEB}^{(g)} $
\Ensure $ \textbf{S}_{c,0}^{(g)} = \frac{1}{\sqrt{M}} e^{ j \measuredangle \big( \textbf{S}_{GEB}^{(g)} \big) } $, iteration index $ n = 0 $
   
	\Repeat
	
		\State Fix $ \textbf{S}_{c,n}^{(g)} $, find $ \big[ \textbf{S}_{GEB}^{(g)} \big]^H \textbf{S}_{c,n}^{(g)} = \textbf{U}_{n}^{(g)} \boldsymbol{\Sigma}_{n}^{(g)} \big[ \textbf{V}_{n}^{(g)} \big]^H $
		
		\State $ \textbf{S}_{cm,n}^{(g)} = \textbf{V}_{n}^{(g)} \big[ \textbf{U}_{n}^{(g)} \big]^H $	

		\State Fix $ \textbf{S}_{cm,n}^{(g)} $, find $ \textbf{S}_{c,n+1}^{(g)} = \frac{1}{\sqrt{M}} e^{ j \measuredangle \Big( \textbf{S}_{GEB}^{(g)} [ \textbf{S}_{cm,n}^{(g)} ]^H \Big) }  $		

		\State $ n \gets n + 1 $
	
	\Until{$ \norm{\textbf{S}_{GEB}^{(g)} \big[ \textbf{S}_{cm,n}^{(g)} \big]^H - \textbf{S}_{c,n+1}^{(g)}}_F $ converges}

\Output $ \textbf{S}_{c}^{(g)} = \textbf{S}_{c,n+1}^{(g)}  $ and $ \textbf{S}_{cm}^{(g)} = \textbf{S}_{cm,n}^{(g)} $

\end{algorithmic}
\end{algorithm}

For compensation matrix-based design, constrained beamformer $ \textbf{S}_c^{(g)} $ is applied in analog domain whereas compensation matrix $ \textbf{S}_{cm}^{(g)} $ is applied in digital baseband just before the digital beamformers. Thus, overall analog beamformer of group $ g $ becomes $ \textbf{S}^{(g)} = \textbf{S}_c^{(g)} \textbf{S}_{cm}^{(g)} $ as it is seen from Fig. \ref{BlockDiagram}.

\subsection{Partially Connected Array}

Unlike fully connected arrays, each antenna element is connected to only one RF chain for partially connected designs. In other words, analog beamformer has a sparse structure resulting in less phase shifters. Let the binary connection matrix of group $ g $ is denoted by $ \textbf{\Pi}^{(g)} \in \mathbb{Z}^{M \times D_g} $ where each row contains only one non-zero entry. Column where $ i^{th} $ row has the non-zero entry is denoted by $ j(i) $. Constrained beamforming matrix $ \textbf{S}_{c}^{(g)} $ has non-zero entries where the connection matrix $ \textbf{\Pi}^{(g)} $ has non-zero entries. Similar to PE-AM algorithm, we consider joint optimization of constrained beamformer and compensation matrix. There are two types of partially connected arrays. In the first one, optimization problem is solved by using a fixed connection matrix. The second one considers finding the optimum connection matrix for a given unconstrained beamformer.

\subsubsection{Fixed Subarray Design}

Joint optimization problem of constrained analog beamformer with partially connected array and compensation matrix can be written as

\begin{equation}\label{problem5}
\begin{aligned}
& \underset{\textbf{S}_{c}^{(g)}, \textbf{S}_{cm}^{(g)}}{\text{minimize}}
& & \norm{\textbf{S}_{GEB}^{(g)} - \textbf{S}_{c}^{(g)} \textbf{S}_{cm}^{(g)}}_F^2 \\
& \text{subject to}
& & \big| [\textbf{S}_{c}^{(g)}]_{(i,j)} \big| = \frac{1}{\sqrt{M}} [\textbf{\Pi}^{(g)}]_{(i,j)}, \forall i,j,
\end{aligned}
\end{equation}

\noindent
for a given connection matrix $ \textbf{\Pi}^{(g)} $. Note that since there is only one non-zero element in each row of $ \textbf{S}_{c}^{(g)} $, non-zero element in $ i^{th} $ row is multiplied with $ j(i)^{th} $ row of the compensation matrix $ \textbf{S}_{cm}^{(g)} $. By using this property, optimization problem in \eqref{problem5} is simplified as

\begin{equation}\label{problem6}
\begin{aligned}
& \underset{ \{ \beta_i^{(g)} \}_{i=1}^{M} }{\text{minimize}}
& \norm{\big[ \textbf{S}_{GEB}^{(g)} \big]_{(i,:)} - \frac{1}{\sqrt{M}} e^{j\beta_i^{(g)}} \big[ \textbf{S}_{cm}^{(g)} \big]_{(j(i),:)} }_F^2 \\
\end{aligned}
\end{equation}

\noindent
for given compensation matrix $ \textbf{S}_{cm}^{(g)} $, where $ \beta_i^{(g)} $ is the phase of non-zero entry in the $ i^{th} $ row of $ \textbf{S}_{c}^{(g)} $. It can be seen that optimization of phase values is decoupled. Hence, optimum phase values $ \big\{ \beta_{i}^{(g)\star} \big\}_{i=1}^{M} $ are found as

\begin{equation}
\beta_{i}^{(g)\star} = \measuredangle \Big( \big[ \textbf{S}_{GEB}^{(g)} \big]_{(i,:)} \big[ \textbf{S}_{cm}^{(g)} \big]_{(j(i),:)}^H \Big). \label{S_c_part}
\end{equation}

Given constrained beamformer $ \textbf{S}_{c}^{(g)} $, optimization problem to find the compensation matrix becomes

\begin{equation}\label{problem7}
\begin{aligned}
& \underset{\textbf{S}_{cm}^{(g)}}{\text{minimize}}
& & \norm{\textbf{S}_{GEB}^{(g)} - \textbf{S}_{c}^{(g)} \textbf{S}_{cm}^{(g)}}_F^2. \\
\end{aligned}
\end{equation}

As there is no constraint on the compensation matrix, optimum solution is the least squares (LS) solution given as

\begin{equation}
\textbf{S}_{cm}^{(g)\star} = \Big( \big[ \textbf{S}_{c}^{(g)} \big]^H \textbf{S}_{c}^{(g)} \Big)^{-1} \big[ \textbf{S}_{c}^{(g)} \big]^H \textbf{S}_{GEB}^{(g)}.
 \label{S_cm_part}
\end{equation}

Alternating minimization approach is adopted as in PE-AM algorithm to find constrained analog beamformer and compensation matrix. Summary of the fixed partially connected array design is given in Algorithm \ref{Fixed}.

\begin{algorithm}
\caption{Fixed Subarray Array Design}\label{Fixed}
\begin{algorithmic}[1]

\Require $ \textbf{S}_{GEB}^{(g)}, \textbf{\Pi}^{(g)} $
\Ensure Set random phases to $ \textbf{S}_{c,0}^{(g)} $, iteration index $ n = 0 $
   
	\Repeat
	
		\State Fix $ \textbf{S}_{c,n}^{(g)} $ 
		
		\State Find $ \textbf{S}_{cm,n}^{(g)} = \Big( \big[ \textbf{S}_{c,n}^{(g)} \big]^H \textbf{S}_{c,n} \Big)^{-1} \big[ \textbf{S}_{c,n}^{(g)} \big]^H \textbf{S}_{GEB}^{(g)} $
		
		\State Fix $ \textbf{S}_{cm,n}^{(g)} $, find $ \textbf{S}_{c,n+1}^{(g)} $ by \eqref{S_c_part}	

		\State $ n \gets n + 1 $
	
	\Until{$ \norm{ \textbf{S}_{GEB}^{(g)} - \textbf{S}_{c,n+1}^{(g)} \textbf{S}_{cm,n}^{(g)} }_F $ converges}

\Output $ \textbf{S}_{c}^{(g)} = \textbf{S}_{c,n+1}^{(g)}  $ and $ \textbf{S}_{cm}^{(g)} = \textbf{S}_{cm,n}^{(g)} $

\end{algorithmic}
\end{algorithm}

\vspace{-3mm}
\subsubsection{Dynamic Subarray Design}

This section aims to find an algorithm that finds the optimum connection matrix $ \textbf{\Pi}^{(g)} $ for a given GEB. Let $ \tilde{\textbf{S}}_{c}^{(g)} \in \mathbb{C}^{M \times D_g} $ be a partially connected constrained beamformer whose non-zero entry locations are unknown. This matrix needs to satisfy following constraints

\begin{subequations}
\begin{equation}
\big| [\tilde{\textbf{S}}_{c}^{(g)}]_{(i,j)} \big| \in \{0,1\}, \forall i,j, \label{constraint1}
\end{equation}    
\begin{equation}
\sum_{j=1}^{D} \big| [\tilde{\textbf{S}}_{c}^{(g)}]_{(i,j)} \big| = 1, \forall i,\label{constraint2}
\end{equation}
\begin{equation}
\sum_{i=1}^{M} \big| [\tilde{\textbf{S}}_{c}^{(g)}]_{(i,j)} \big| \geq 1, \forall j,\label{constraint3}
\end{equation}
\end{subequations}

\noindent
where the first one is constant-modulus constraint while second one needs to be satisfied to make sure that each antenna is connected to only one RF chain. The third constraint should be satisfied so that each RF chain is connected to at least one antenna. Let the equivalent near-optimal generalized eigenbeamformer be $ \textbf{S}_{GEB}^{(g)} \textbf{A}^{(g)} $ where $ \textbf{A}^{(g)} \in \mathbb{C}^{D_g \times D_g} $ is an arbitrary unitary matrix. Note that $ \textbf{S}_{GEB}^{(g)} \textbf{A}^{(g)} $ can replace $ \textbf{S}_{GEB}^{(g)} $ as long as $ \textbf{A}^{(g)} $ is invertible according to Lemma \ref{LEM1}. Let us ignore the constraint in \eqref{constraint3} for now and construct an optimization problem given as

\begin{equation}\label{problem8}
\begin{aligned}
& \underset{ \textbf{A}^{(g)} \in \mathcal{A}, \tilde{\textbf{S}}_{c}^{(g)} }{\text{minimize}}
& & \norm{ \textbf{S}_{GEB}^{(g)} \textbf{A}^{(g)} - \tilde{\textbf{S}}_{c}^{(g)} }_F^2 \\
& \text{subject to}
& & \big| [\tilde{\textbf{S}}_{c}^{(g)}]_{(i,j)} \big| \in \{0,1\}, \forall i,j; \\
& \text{}
& & \sum_{j=1}^{D} \big| [\tilde{\textbf{S}}_{c}^{(g)}]_{(i,j)} \big| = 1, \forall i.
\end{aligned}
\end{equation}

\noindent
where $ \mathcal{A} $ is the set of unitary matrices which are invertible. Given $ \textbf{A}^{(g)} $, entries of optimal constrained analog beamformer can be found as

\begin{equation}
[\tilde{\textbf{S}}_{c}^{(g)\star}]_{(i,j)} = 
\begin{dcases}
    e^{j \measuredangle \big( [ \textbf{S}_{GEB}^{(g)} \textbf{A}^{(g)} ]_{(i,j)} \big)}, & \text{if } j = j^{\star}(i)\\
    0 ,& \text{otherwise,}
\end{dcases}\label{S_c_part_tilde}
\end{equation}

\noindent
where $ j^{\star}(i) = \argmax_{1 \leq j \leq  D} | [ \textbf{S}_{GEB}^{(g)} \textbf{A}^{(g)} ]_{(i,j)} |  $ for $ i = 1,2,\dots,M $ \cite{ChStat}. For given $ \tilde{\textbf{S}}_{c}^{(g)} $, optimization problem in \eqref{problem8} reduces to orthogonal Procrustes problem expressed as

\begin{equation}\label{problem9}
\begin{aligned}
& \underset{ \textbf{A}^{(g)} \in \mathcal{A}}{\text{minimize}}
& & \norm{ \textbf{S}_{GEB}^{(g)} \textbf{A}^{(g)} - \tilde{\textbf{S}}_c^{(g)} }_F^2
\end{aligned},
\end{equation}

\noindent
which has the same solution as in the PE-AM algorithm. Optimum unitary matrix is found as

\begin{equation}
\textbf{A}^{(g)\star} = \textbf{V}^{(g)} \big[ \textbf{U}^{(g)} \big]^H, \label{A}
\end{equation}

\noindent
where the following SVD is used: $ \big[ \textbf{S}_{GEB}^{(g)} \big]^H \tilde{\textbf{S}}_{c}^{(g)} = \textbf{U}^{(g)} \boldsymbol{\Sigma}^{(g)} \big[ \textbf{V}^{(g)} \big]^H $. Similar to previous algorithms, alternating minimization is applied to find partially connected constrained beamformer which is summarized in Algorithm \ref{Dynamic1}.

\begin{algorithm}
\caption{Dynamic Connection Design}\label{Dynamic1}
\begin{algorithmic}[1]

\Require $ \textbf{S}_{GEB}^{(g)} $
\Ensure Set random phases to $ \tilde{\textbf{S}}_{c,0}^{(g)} $, iteration index $ n = 0 $
   
	\Repeat
	
		\State Fix $ \tilde{\textbf{S}}_{c,n}^{(g)} $, find $ \big[ \textbf{S}_{GEB}^{(g)} \big]^H \tilde{\textbf{S}}_{c,n}^{(g)} = \textbf{U}_{n}^{(g)} \boldsymbol{\Sigma}_{n}^{(g)} \big[ \textbf{V}_{n}^{(g)} \big]^H $

		\State $ \textbf{A}_{n}^{(g)} = \textbf{V}_{n}^{(g)} \big[ \textbf{U}_{n}^{(g)} \big]^H $
		
		\State Fix $ \textbf{A}_{n}^{(g)} $, find $ \tilde{\textbf{S}}_{c,n+1}^{(g)} $ by \eqref{S_c_part_tilde}	

		\State $ n \gets n + 1 $
	
	\Until{$ \norm{ \textbf{S}_{GEB}^{(g)} \textbf{A}_{n}^{(g)} - \tilde{\textbf{S}}_{c,n+1}^{(g)} }_F $ converges}

\Output $ \tilde{\textbf{S}}_{c}^{(g)} = \tilde{\textbf{S}}_{c,n+1}^{(g)} $

\end{algorithmic}
\end{algorithm}

\begin{rem}
The result of the Algorithm \ref{Dynamic1}, which solves a non-convex problem, depends on the initial random phases which leads to different antenna connections for each result. Hence, this algorithm should be repeated several times with different initial phases and the result yielding maximum expected SINR for group $ g $ in reduced dimension, which is the SINR after statistical pre-beamformer $ \textbf{S}^{(g)} $ (i.e., just before digital beamforming), should be used. Expression for the expected SINR is given in \eqref{SINR_bar} in Appendix \ref{stat_SINR}, where we denote the expected SINR of group $ g $ by $ \overline{SINR}^{(g)} $.\label{rem1}
\end{rem}

\begin{rem}
Constraint \eqref{constraint3} is not considered in Algorithm \ref{Dynamic1} which is repeated several times as stated in Remark \ref{rem1}. Optimum connection matrix should be selected from the results where there is a connection between any RF chain and at least one antenna to satisfy this constraint. \label{rem2}
\end{rem}

It is important to note that only constrained analog beamformer can be found by using Algorithm \ref{Dynamic1}. However, we need to find a compensation matrix to improve the result. We propose that result of Algorithm \ref{Dynamic1} should be used in order to find the connection matrix as $ \textbf{\Pi}^{(g)} = \big| \tilde{\textbf{S}}_{c}^{(g)} \big| $. Then, this connection matrix should be used as an input to Algorithm \ref{Fixed} which jointly optimizes constrained analog beamformer and compensation matrix. Furthermore, initial phases of constrained analog beamformer for Algorithm \ref{Fixed} is selected as the phases of $ \tilde{\textbf{S}}_{c}^{(g)} $. Therefore, it can be seen that Algorithm \ref{Dynamic1} is used to find the connection matrix. Summary of the overall dynamic subarray design is given in Algorithm \ref{Dynamic2} where $ N_{iter} $ is the number of times Algorithm \ref{Dynamic1} is repeated.

\begin{algorithm}
\caption{Dynamic Subarray Design}\label{Dynamic2}
\begin{algorithmic}[1]

\Require $ \textbf{S}_{GEB}^{(g)}, N_{iter} $
\Ensure Iteration index $ t = 0 $

\Repeat

	\State $ t \gets t + 1 $   
   
	\State Run Algorithm \ref{Dynamic1} to find $\tilde{\textbf{S}}_{c}^{(g)} $ and set $ \tilde{\textbf{S}}_{c,t}^{(g)} \! \! \gets \! \tilde{\textbf{S}}_{c}^{(g)} $

	\If {$ \sum_{i=1}^{M} \big| [\tilde{\textbf{S}}_{c,t}^{(g)}]_{(i,j)} \big| \geq 1, \forall j $}
		\State Calculate $ \overline{SINR}_t^{(g)} $ by setting $ \textbf{S}^{(g)} \! \! \gets \! \tilde{\textbf{S}}_{c,t}^{(g)} $ in \eqref{SINR_bar}
	\Else	
		\State Set $ \overline{SINR}_t^{(g)} $ to 0	
	\EndIf

\Until $ t = N_{iter} $

\State $ t^\star = \argmax_{1 \leq t \leq N_{iter}} \overline{SINR}_t^{(g)} $

\State Run Algorithm \ref{Fixed} with $ \textbf{\Pi}^{(g)} = \big| \tilde{\textbf{S}}_{c,t^\star}^{(g)} \big| $ and initial phases of $ \tilde{\textbf{S}}_{c,t^\star}^{(g)} $ to find $ \textbf{S}_{c}^{(g)} $ and $ \textbf{S}_{cm}^{(g)} $.

\Output $ \textbf{S}_{c}^{(g)} $ and $ \textbf{S}_{cm}^{(g)} $

\end{algorithmic}
\end{algorithm}

\subsection{Convergence and Complexity Analysis}

Alternating minimization approach is employed to obtain both fully and partially connected constrained analog beamformers. Convergence of given algorithms is guaranteed since cost functions, which are bounded below, are monotonically decreasing at each iteration \cite{LowComplexityHybrid,ChStat}.

Algorithm 1 and Algorithm 3 has the same solution steps which involve phase extraction of matrix multiplication at one step. Another step includes SVD of a matrix with size $ D_g \times D_g $ which would not impose high complexity as number of RF chains $ D_g $ is limited in hybrid systems. Algorithm 2 only includes phase extraction from a vector multiplication and a LS solution which requires the inverse of a $ D_g \times D_g $ matrix. Similarly, this matrix inverse can be taken with ease as $ D_g $ is limited. Algorithm 4, which is used to find the dynamic constrained analog beamformers, involve execution of Algorithm 3 $ N_{iter} $ times. It can be understood that complexity orders of proposed alternating minimization-based algorithms are not high. Moreover, recall that unconstrained analog beamformers are designed based on slowly varying CCMs. Hence, constrained beamformers and dynamic connections are updated infrequently. To conclude, these algorithms do not increase the overall complexity of the system.

\section{Performance Measures} \label{perf}

In order to show the effectiveness of proposed constrained design, we should define some performance measures. We firstly introduce ergodic capacity which is the most commonly used performance measure. Then, we describe a channel estimation method in reduced dimension and define the normalized mean square error (nMSE) that shows the accuracy of channel estimation.

\subsection{Ergodic Capacity}

Ergodic capacity of user $ m $ in group $ g $ is defined as

\begin{equation}
C^{(g_m)} = \mathbb{E}_{\big\{ \textbf{H}_{eff,l}^{(g,g)} \big\}_{l=0}^{L-1}} \Big\{ \mathrm{log}_2 \Big( 1+SINR^{(g_m)} \Big) \Big| \textbf{S}^{(g)} \Big\},  \label{rate}
\end{equation}

\noindent
where expectation is taken over different instantaneous channel realizations while keeping the analog beamformer constant, i.e., user locations and CCMs are kept constant. Output SINR of user $ m $ in group $ g $, which is the SINR after all the receiver processing, is denoted by $ SINR^{(g_m)} $. Let the Bussgang decomposition of the soft output for decoding transmitted symbols given in \eqref{x_n_g} is expressed as \cite{Bussgang}

\begin{equation}
\hat{x}_n^{(g_m)} = a^{(g_m)} x_n^{(g_m)} + b^{(g_m)}, \label{x_n_g_m}
\end{equation} 

\noindent
for $ n = 0,1,\dots,N-1 $ and $ m = 1,2,\dots,K_g $, where $ a^{(g_m)} $ and $ b^{(g_m)} $ are the complex amplitude and residual interference of the transmitted symbol of user $ m $ in group $ g $, respectively. Then the output SINR of this user for given $ \big\{ \textbf{H}_{eff,l}^{(g)} \big\}_{l=0}^{L-1} $ is calculated as

\begin{equation}
SINR^{(g_m)} = \frac{| a^{(g_m)} |^2 \mathbb{E}\big\{ | x_n^{(g_m)} |^2 \big\} }{ \mathbb{E}\big\{ | b_n^{(g_m)} |^2 \big\}} = \frac{E_s^{(g)}}{K_g} \frac{| a^{(g_m)} |^2}{ \mathbb{E}\big\{ | b_n^{(g_m)} |^2 \big\}}.\label{SINR}
\end{equation}

By using \eqref{Y_k_g}, \eqref{X_k_g} and \eqref{x_n_g}, complex amplitude can be computed as

\begin{equation}
\begin{aligned}
a^{(g_m)} &= \mathbb{E} \big\{ \hat{x}_n^{(g_m)} \big( x_n^{(g_m)} \big)^* \big\} \big/ \mathbb{E}\big\{ | x_n^{(g_m)} |^2 \big\} \\
&= \frac{1}{N} \sum_{k=0}^{N-1} \big[ \textbf{W}_k^{(g)} \big]_{(:,m)}^H \textbf{\Lambda}_{eff,k}^{(g,g)} \textbf{e}_m,
\end{aligned}\label{a^g_m}
\end{equation}

\noindent
where $ \big[ \textbf{W}_k^{(g)} \big]_{(:,m)} $ is the $ m^{th} $ column of $ \textbf{W}_k^{(g)} $ and $ \textbf{e}_m \in \mathbb{Z}^{K_g \times 1} $ is a binary vector whose $ m^{th} $ entry is one whereas others are zero. On the other hand, instead of residual interference, its power can be calculated as $ \mathbb{E} \big\{ | b^{(g_m)} |^2 \big\} = \big\{ | \hat{x}_n^{(g_m)} |^2 \big\} - \frac{E_s^{(g)}}{K_g} | a^{(g_m)} |^2 $ where

\begin{equation}
\mathbb{E} \big\{ | \hat{x}_n^{(g_m)} |^2 \big\} = \frac{1}{N} \sum_{k=0}^{N-1} \big[ \textbf{W}_k^{(g)} \big]_{(:,m)}^H \textbf{R}_{\tilde{\textbf{y}}_{f,k}^{(g)}} \big[ \textbf{W}_k^{(g)} \big]_{(:,m)},\label{est_pow}
\end{equation}

\noindent
which can be obtained by using Parseval's relation for \eqref{X_k_g} and \eqref{x_n_g}. Covariance matrix of $ \tilde{\textbf{y}}_{f,k}^{(g)} $ given in \eqref{R_Y_k} is used for the calculation above. Since $ a^{(g_m)} $ and $ b^{(g_m)} $ do not depend on the time instance $ n $, they can be calculated once for given instantaneous effective channel matrices.

\subsection{Beamspace-aware Channel Estimation}\label{sec:ChannelEstimation}

In this paper, channel estimation is realized in reduced dimension. Instantaneous intra-group effective channel matrices $ \big\{ \textbf{H}_{eff,l}^{(g,g)} \big\}_{l=0}^{L-1} $ of each group are estimated during the uplink channel estimation phase in TDD mode where distinct pilot sequences with length $ T $ are assigned to each user in a group. Equivalent signal vector of group $ g $ at the BS at the end of the pilot sequence transmission is defined as

\begin{equation}
\begin{aligned}
\bar{\textbf{y}}^{(g)} & \triangleq \Big[ \big[ \tilde{\textbf{y}}_{0}^{(g)} \big]^H, \big[ \tilde{\textbf{y}}_{1}^{(g)} \big]^H, \cdots, \big[ \tilde{\textbf{y}}_{T-1}^{(g)} \big]^H  \Big]^H  \\
& = \sum_{g'=1}^{G} \Big( \textbf{X}^{(g')} \otimes \textbf{I}_{D_{g}} \Big) \bar{\textbf{h}}_{eff}^{(g,g')} + \Big( \textbf{I}_{T} \otimes \big[ \textbf{S^{(g)}} \big]^H \Big)\ \bar{\textbf{n}},
\end{aligned}\label{y_bar_g}
\end{equation}

\noindent
where $ \textbf{X}^{(g)} \in \mathbb{C}^{T \times K_g L} $ is the equivalent pilot matrix of group $ g $ and $ \bar{\textbf{h}}_{eff}^{(g,g')} \in \mathbb{C}^{D_g K_{g'} L \times 1} $ is the overall effective channel vector of group $ g' $ after the analog beamformer of group $ g $. Note that equivalent signal vector $ \bar{\textbf{y}}^{(g)} \in \mathbb{C}^{TD_g \times 1} $ is obtained by vertically concatenating received signal vectors after the analog beamformer $ \big\{ \tilde{\textbf{y}}_{n}^{(g)} \big\}_{n=0}^{T-1} $ in \eqref{y_n_g}. Similarly, equivalent noise vector $ \bar{\textbf{n}} \in \mathbb{C}^{TM \times 1} $ is constructed by vertically concatenating noise vectors $ \{ \textbf{n}_n \}_{n=0}^{T-1} $. Equivalent pilot matrix is defined as

\begin{equation}
\textbf{X}^{(g)} \triangleq \Big[ \textbf{X}^{(g_1)}, \textbf{X}^{(g_2)}, \cdots, \textbf{X}^{(g_{K_g})} \Big], \label{X}
\end{equation}

\noindent
where $ \big[ \textbf{X}^{(g_m)} \big]_{(i,j)} \triangleq x_{i-j}^{(g_m)} $, for $ i = 0,1,\dots,T-1 $ and $ j = 0,1,\dots,L-1 $. Pilot sequence entries of user $ m $ in group $ g $ are denoted by $ x_{n}^{(g_m)} $. Note that entries with negative indices can be taken as cyclic prefix or previously decoded symbols. Overall effective channel vector of group $ g' $ for the analog beamformer of group $ g $ is defined as

\begin{equation}
\bar{\textbf{h}}_{eff}^{(g,g')} \triangleq \Big[ \big[ \bar{\textbf{h}}_{eff}^{(g,g'_1)} \big]^H, \big[ \bar{\textbf{h}}_{eff}^{(g,g'_2)} \big]^H, \cdots, \big[ \bar{\textbf{h}}_{eff}^{(g,g'_{K{g'}})} \big]^H \Big]^H, \label{h_bar_g}
\end{equation}

\begin{equation}
\bar{\textbf{h}}_{eff}^{(g,g'_m)} \triangleq \Big[ \big[ \textbf{h}_{eff,0}^{(g,g'_m)} \big]^H, \big[ \textbf{h}_{eff,1}^{(g,g'_m)} \big]^H, \cdots, \big[ \textbf{h}_{eff,L-1}^{(g,g'_m)} \big]^H \Big]^H, \label{h_bar_g_m}
\end{equation}

\noindent
where $ \textbf{h}_{eff,l}^{(g,g'_m)} \triangleq \big[ \textbf{S}^{(g)} \big]^H \textbf{h}_l^{(g'_m)} $, or equivalently $ \textbf{h}_{eff,l}^{(g,g'_m)} \triangleq \big[ \textbf{H}_{eff,l}^{(g,g')} \big]_{(:,m)}, $ for $ l = 0,1,\dots,L-1 $. It is important to note that we are only interested in intra-group channel vectors. In other words, our aim is to estimate $ \bar{\textbf{h}}_{eff}^{(g,g)} $ from \eqref{y_bar_g} as

\begin{equation}
\hat{\bar{\textbf{h}}}_{eff}^{(g,g)} = \big[ \textbf{Z}^{(g)} \big]^H \bar{\textbf{y}}^{(g)} \label{h_eff_est}
\end{equation}

\noindent
where $ \textbf{Z}^{(g)} \in \textbf{C}^{D_g K_g L \times T D_g} $ is the channel estimator matrix. In this paper, LMMSE and LS channel estimators are considered. Expressions for these estimators can be found in Appendix \ref{channel}. In order to show the accuracy of channel estimation, we calculate nMSE which is defined as

\begin{equation}
\begin{aligned}
& nMSE^{(g)} \triangleq \frac{ \mathbb{E} \bigg\{ \norm{ \bar{\textbf{h}}_{eff}^{(g,g)} - \hat{\bar{\textbf{h}}}_{eff}^{(g,g)} }^2 \bigg\} }{ \mathbb{E} \bigg\{ \norm{ \bar{\textbf{h}}_{eff}^{(g,g)} }^2 \bigg\} }  \\
& \! = \frac{ \mathrm{Tr} \Big\{ \textbf{R}_{ \bar{\textbf{h}}_{eff}^{(g,g)}} \Big\} \! + \! \mathrm{Tr} \Big\{ \textbf{R}_{ \hat{\bar{\textbf{h}}}_{eff}^{(g,g)}} \Big\} \! - \! 2 \mathrm{Re} \Big\{ \mathrm{Tr} \Big\{ \textbf{R}_{ \hat{\bar{\textbf{h}}}_{eff}^{(g,g)} \bar{\textbf{h}}_{eff}^{(g,g)}} \Big\} \Big\} }{ \mathrm{Tr} \Big\{ \textbf{R}_{ \bar{\textbf{h}}_{eff}^{(g,g)}} \Big\} }
\end{aligned} \label{nMSE}
\end{equation}

\noindent
where covariance matrix of $ \hat{\bar{\textbf{h}}}_{eff}^{(g,g)} $ and covariance between $ \hat{\bar{\textbf{h}}}_{eff}^{(g,g)} $ and $ \bar{\textbf{h}}_{eff}^{(g,g)} $ can be computed as

\begin{equation}
\textbf{R}_{ \hat{\bar{\textbf{h}}}_{eff}^{(g,g)} } = \mathbb{E} \Big\{ \hat{\bar{\textbf{h}}}_{eff}^{(g,g)} \big[ \hat{\bar{\textbf{h}}}_{eff}^{(g,g)} \big]^H \Big\} = \big[ \textbf{Z}^{(g)} \big]^H \textbf{R}_{\bar{\textbf{y}}^{(g)}} \textbf{Z}^{(g)}, \label{R_h_eff_cap}
\end{equation}

\begin{equation}
\textbf{R}_{ \hat{\bar{\textbf{h}}}_{eff}^{(g,g)} \bar{\textbf{h}}_{eff}^{(g,g)}} = \mathbb{E} \Big\{ \hat{\bar{\textbf{h}}}_{eff}^{(g,g)} \big[ \bar{\textbf{h}}_{eff}^{(g,g)} \big]^H \Big\} = \big[ \textbf{Z}^{(g)} \big]^H \textbf{R}_{\bar{\textbf{y}}^{(g)} \bar{\textbf{h}}_{eff}^{(g,g)}}. \label{R_h_eff_h_eff_cap}
\end{equation}

Expressions of $ \textbf{R}_{\bar{\textbf{y}}^{(g)} \bar{\textbf{h}}_{eff}^{(g,g)}} $, $ \textbf{R}_{\bar{\textbf{y}}^{(g)}} $ and $ \textbf{R}_{\bar{\textbf{h}}_{eff}^{(g,g)}} $ are given in \eqref{R_y_g_h_eff}, \eqref{R_y_g_bar} and \eqref{R_h_eff} in Appendix \ref{channel}, respectively. There is a final remark for the described channel estimation scheme.

\begin{rem}
With the assumption that analog beamformer suppresses inter-group interference perfectly, training periods of groups does not have to be synchronized. In this case, training symbols of other groups ($ \textbf{X}^{(g')}, g' \neq g $) can be assumed to be random. Furthermore, users in a group should use different training sequences whereas users in other groups can reuse the same sequences with properly designed analog beamformers. Hence, it can be inferred that this channel estimation scheme is robust against pilot contamination. \label{rem3}
\end{rem}

\vspace{-6mm}
\section{Numerical Results}

In this section, proposed constrained analog beamformers are compared by using the performance measures in Section \ref{perf}. BS has an ULA with $ M = 128 $ antennas whereas users have single-antenna elements. Carrier frequency is selected as $30$GHz while the antenna spacing of the ULA is set to half the wavelength. A predetermined user-grouping is employed with $ G = 4 $ groups where each group has $ K_g = 2 $ users\footnote{Here, we assume that users come in groups, either by nature or by the use of a proper user grouping algorithm \cite{JSDM_Group,Joint_Group}.}. Although total number of delays is $ L = 32 $, users do not have active MPCs at every delay. In the simulations, angular spread $ \Delta_l^{(g_m)} $ of all users is set to $2^\circ$ for every MPC. Power profile function $ \rho_l^{(g_m)}(\theta) $ is assumed to be uniform in the angular spread of the MPC and channel gain $ \sqrt{\gamma^{(g_m)}} $ is set to $1$ for all users. MPC cluster powers of users are assumed to be the same. Noise variance $ N_0 $ is taken as $1$ for simplicity. 

The scenario used for simulations is given in Table \ref{table1}. This table shows the active MPC indices of groups and mean AoAs of users related to each active MPC. Group-1 is selected as the mobile group with shifting angle of $ \phi $ which is swept from $-45^\circ $ to $45^\circ $ with increments of $0.1^\circ$ for simulation purposes. It is important to note that CCMs are slowly varying parameters. However, the reason to use a mobile group is to obtain results for different angular profiles and observe the effect of overlapping MPCs. Hence, we are only interested in the results of the mobile group which is Group-1. For each $ \phi $, CCMs and analog beamformer of Group-1 are calculated, then ergodic capacity and nMSE are obtained. Ergodic capacity of user $ m $ in Group-1 and nMSE of Group-1 corresponding to shifting angle $ \phi $ are denoted by $ C^{(1_m)}_\phi $ and $ nMSE^{(1)}_\phi $, respectively. Ergodic capacity results are obtained with perfect CSI knowledge.

\begin{table}[!h]
\small
\caption{Angle-delay profile of groups}
\begin{center}
\vspace{-3mm}
\begin{tabular}{|c|c|c|}
\hline
Group      & MPC index  & Mean AoAs $\big( \mu_l^{(g_m)} \big) $ of users\\ \hline
\multirow{3}{*}{{1 (Mobile)}} & 0         & {\{}$\phi$-15.5$^\circ$, $\phi$-14.5$^\circ${\}}      \\ \cline{2-3} 
                   & 5         & {\{}$\phi$-2.5$^\circ$, $\phi$-1.5$^\circ${\}}      \\ \cline{2-3} 
                   & 11        & {\{}$\phi$+16.5$^\circ$, $\phi$+17.5$^\circ${\}}      \\ \hline
\multirow{2}{*}{2} & 3         & {\{}40.5$^\circ$, 41.5$^\circ${\}}     \\ \cline{2-3} 
                   & 9         & {\{}20.5$^\circ$, 21.5$^\circ${\}}      \\ \hline
\multirow{2}{*}{3} & 8         & {\{}-10.5$^\circ$, -9.5$^\circ${\}}      \\ \cline{2-3} 
                   & 17        & {\{}-20.5$^\circ$, -19.5$^\circ${\}}    \\ \hline
\multirow{1}{*}{4} & 29        & {\{}-40.5$^\circ$, -39.5$^\circ${\}}       \\ \hline
\end{tabular}
\label{table1}
\end{center}
\end{table}

Angle-delay map of groups is given in Fig. \ref{PDP} in order to visualize the scenario given in Table \ref{table1}. In this figure, $ \phi $ of the mobile group is set to $15^\circ$. This is a typical mm-wave scenario where we observe a sparse angle-delay profile.

\begin{figure}[!h]
\centering
\includegraphics[width=0.44\textwidth]{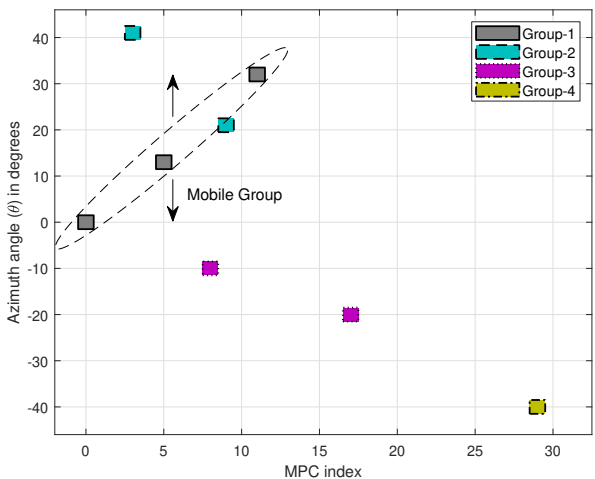}
\caption{Angle-delay map of groups with $ \phi = 15^\circ $ }
\label{PDP}
\vspace{-3mm}
\end{figure}

\subsection{JSDM with Fully Connected Arrays}

In this section, constrained fully connected array structures are compared. Unconstrained near-optimal analog beamformer GEB introduced in Section \ref{sec:analog} is used as a benchmark. DFT beamformer, phase extraction and PE-AM in Algorithm \ref{PE-AM} are the considered constrained beamformers. 

Firstly, we will introduce beampattern metric which shows the power that can be attained at the angle of interest $ \theta $ for a given analog beamformer. Interference suppression capability of beamformers can be observed by using this metric. Beampattern as a function of $ \theta $ for given analog beamformer $ \textbf{S}^{(g)} $ can be defined as

\begin{equation}
B(\theta) = \textbf{u}(\theta)^H \textbf{S}^{(g)} \Big( \big[ \textbf{S}^{(g)} \big]^H \textbf{S}^{(g)} \Big)^{-1} \big[ \textbf{S}^{(g)} \big]^H \textbf{u}(\theta) . \label{beam}
\end{equation}

Fig. \ref{beampattern_full} shows the beampatterns of Group-1 for PE-AM and DFT beamformer with $ \phi = 10^\circ $ and $ D_1 = 4 $ where symbol energy of each group is set to $ 40 $ dB. It is seen that both beamformers can form narrow beams. Although PE-AM does not provide deep nulls, we observed that it can provide moderate suppression for MPCs of other groups whereas DFT beamformer does not consider these MPCs. If beampatterns of PE-AM and DFT beamformer are compared, this suppression appears more significant for interfering angular regions that are close to the angular region of intended group.

\vspace{-2mm}
\begin{figure}[!h]
\centering
\subfloat[Subfigure 1 list of figures text][PE-AM]{
\includegraphics[width=0.46\textwidth]{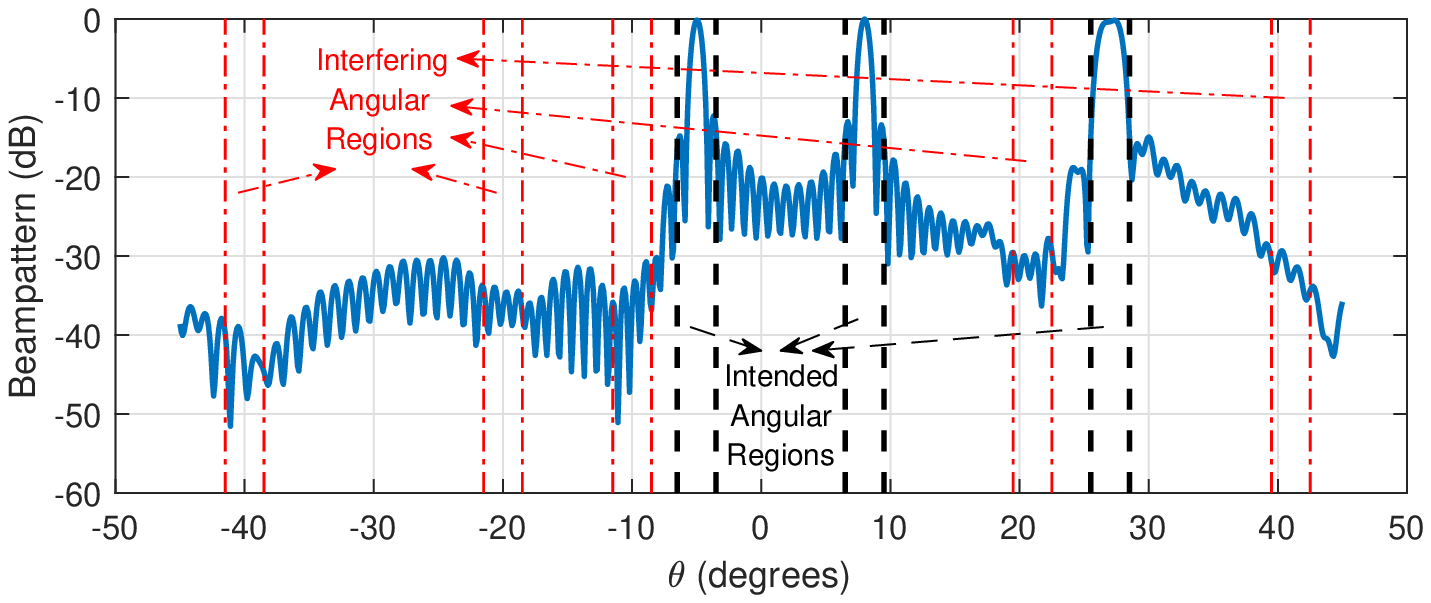}
\label{beampattern_full_D_4_2}}\hfill
\vspace{1mm}
\subfloat[Subfigure 2 list of figures text][DFT beamformer]{
\includegraphics[width=0.46\textwidth]{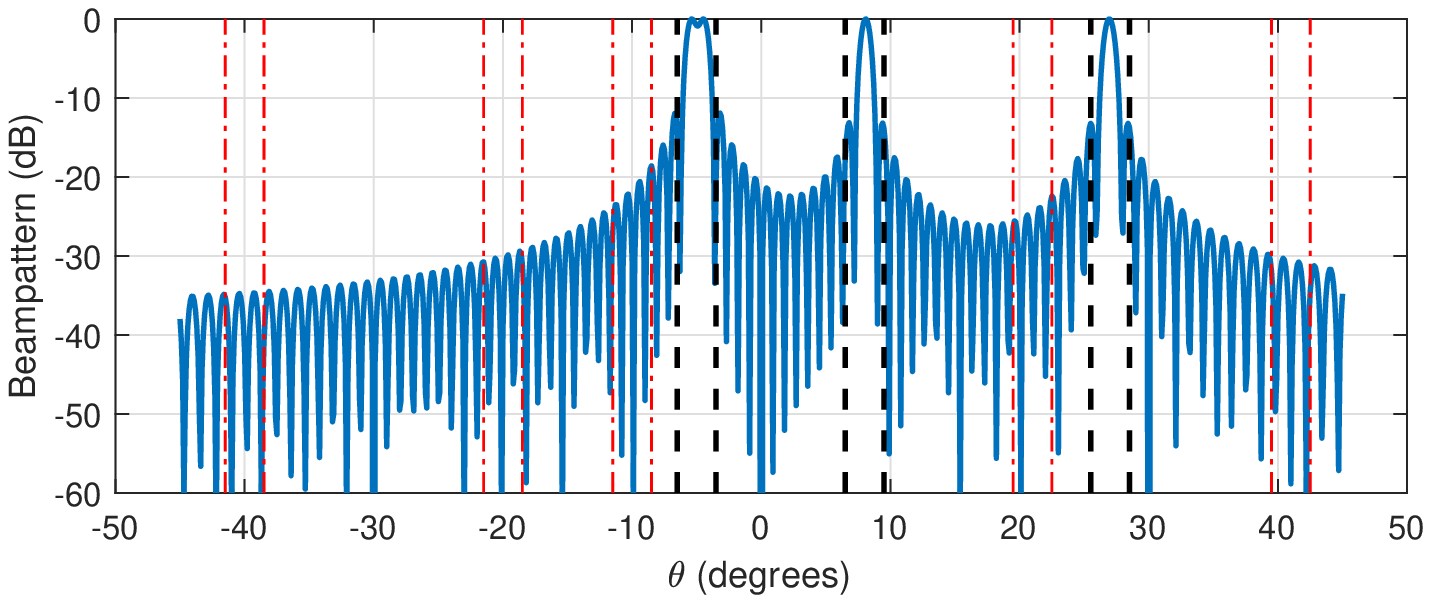}
\label{beampattern_full_D_4_4}}

\caption{Beampatterns of Group-1 for fully connected array structures with $ \phi = 10^\circ $, $ D_1 = 4 $ and $ E_s^{(g)} = 40 $ dB for $ \forall g $}
\label{beampattern_full}
\vspace{-1mm}
\end{figure}

\begin{figure*}[!h]
\centering

\subfloat[Subfigure 1 list of figures text][$ D_1 = 2 $]{
\includegraphics[width=0.4\textwidth]{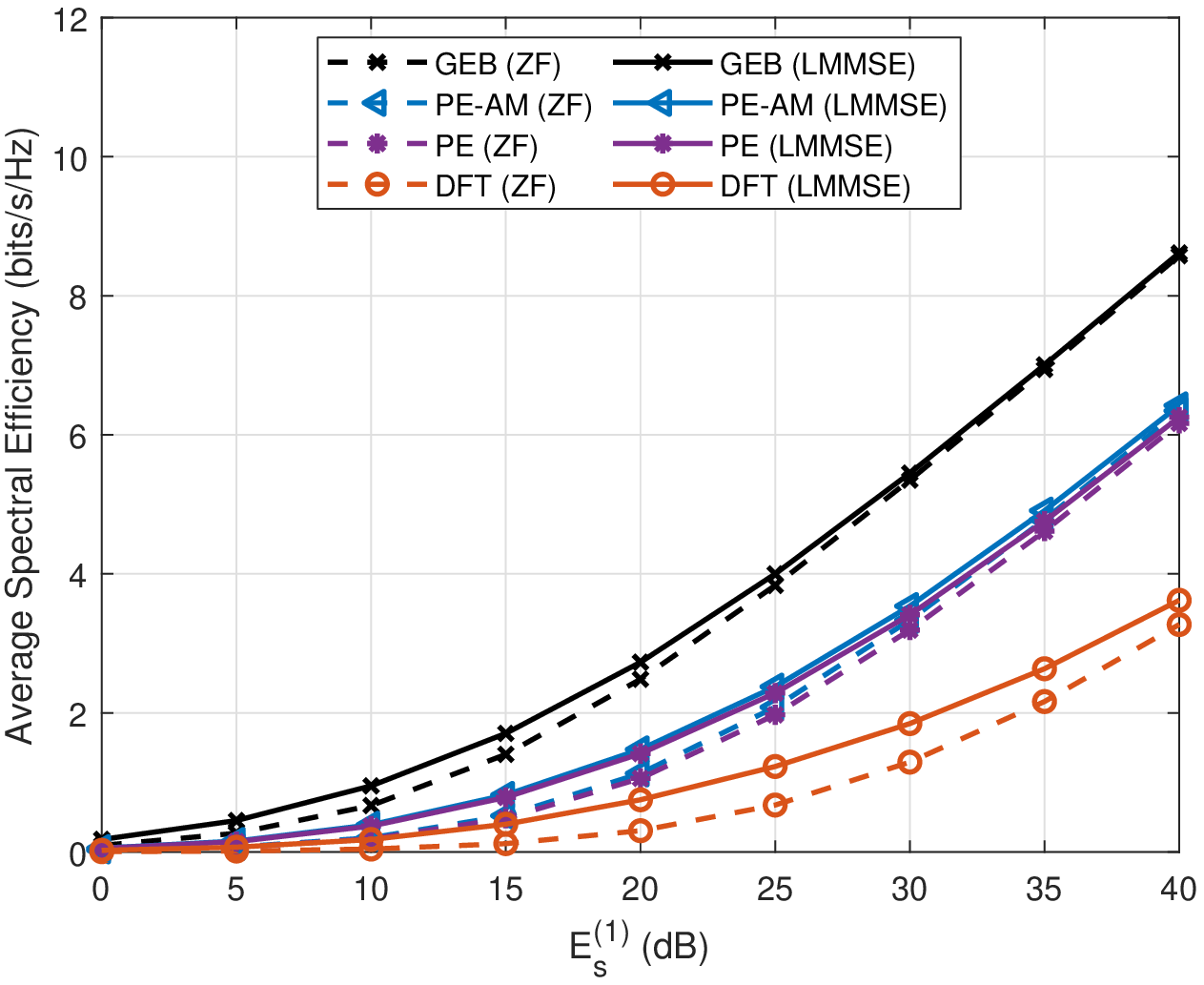}
\label{Rate_full_D_2_3}}\hspace{0.05\textwidth}
\subfloat[Subfigure 2 list of figures text][$ D_1 = 4 $]{
\includegraphics[width=0.4\textwidth]{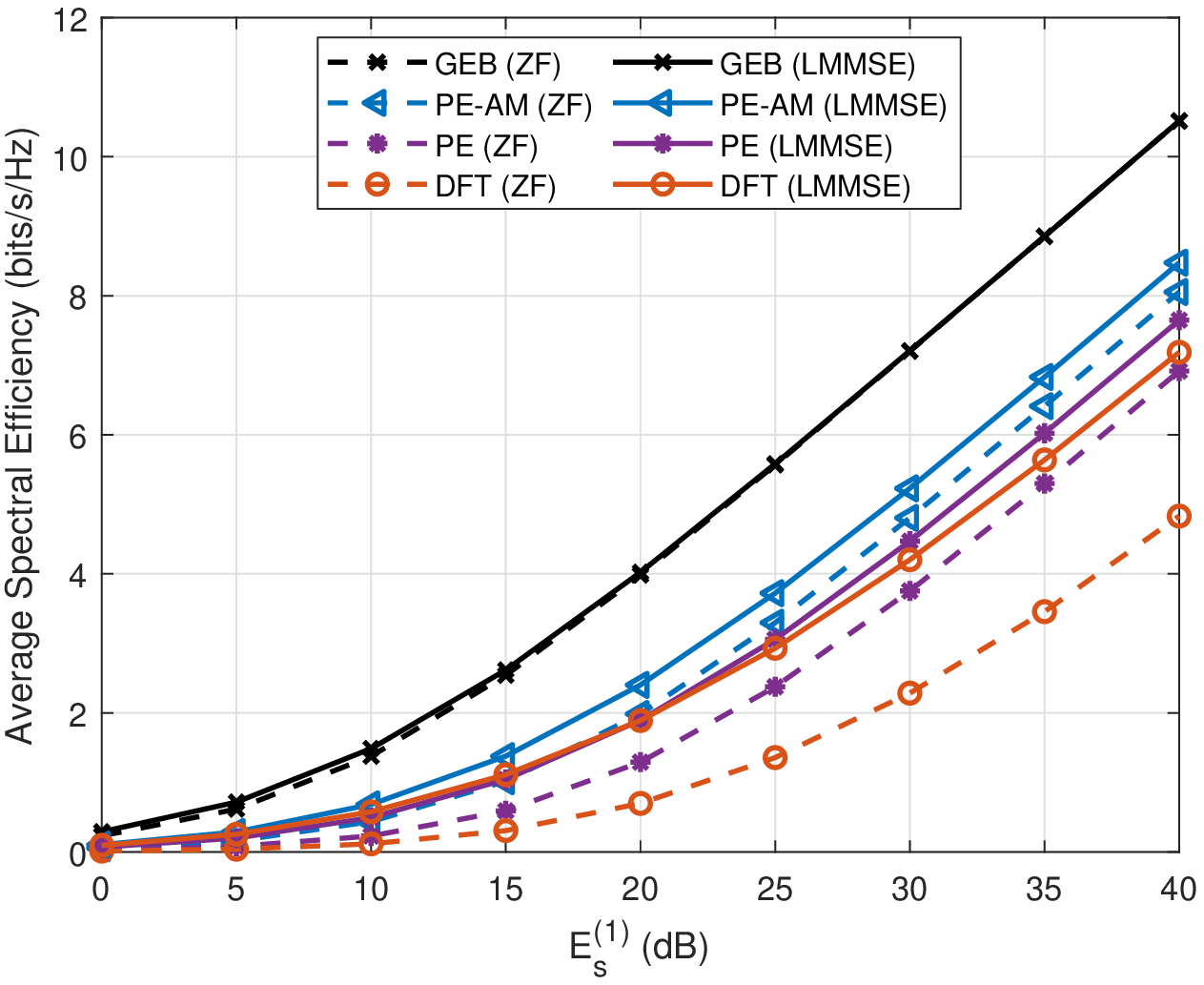}
\label{Rate_full_D_4_3}}

\caption{Average spectral efficiency of Group-1 vs. $ E_s^{(1)} $ for fully connected array structures with $ E_s^{(g')} = 40 $ dB for $ g' \neq 1 $}
\label{Rate_full}

\end{figure*}

\begin{figure*}[!h]
\centering

\subfloat[Subfigure 1 list of figures text][$ D_1 = 2 $]{
\includegraphics[width=0.4\textwidth]{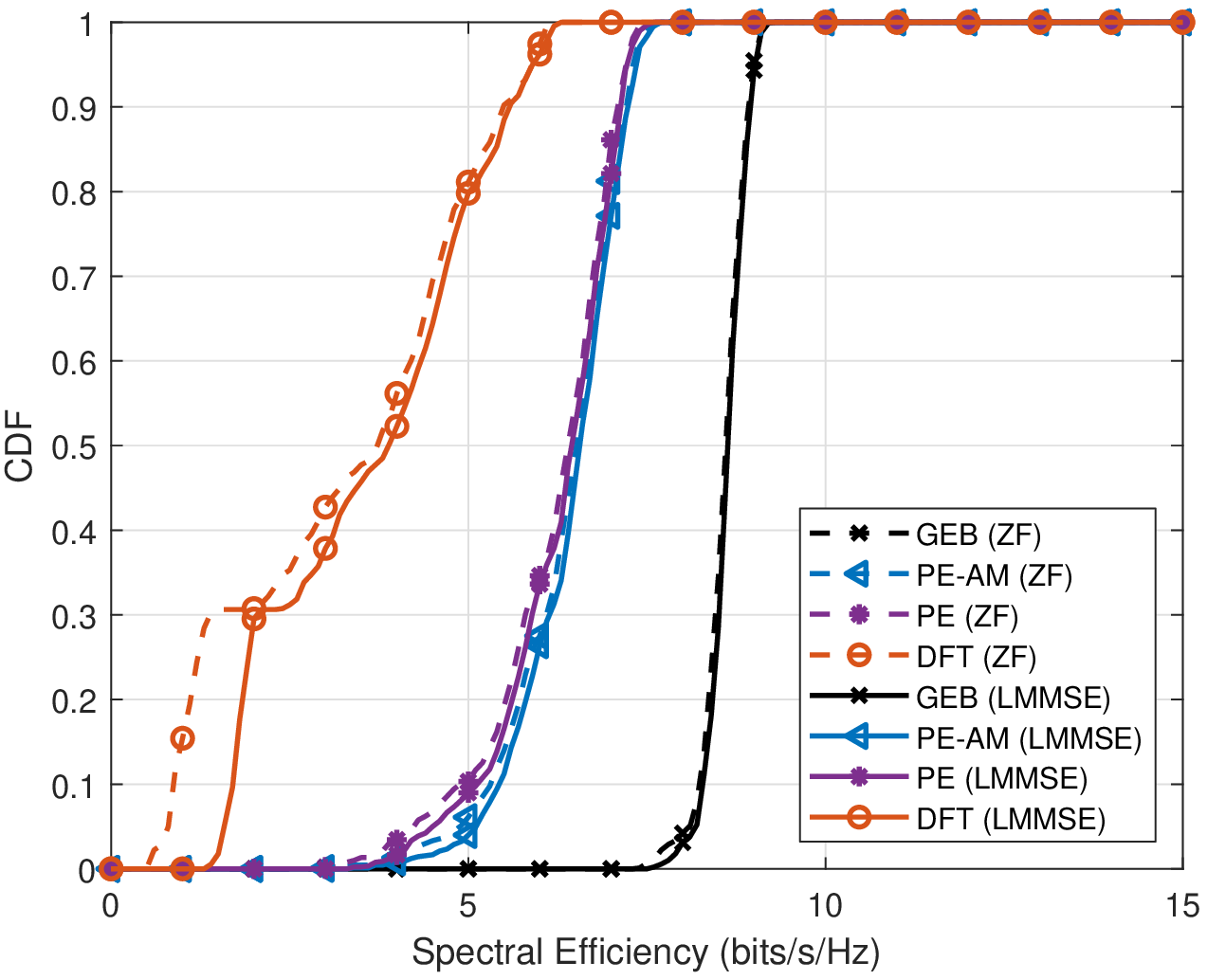}
\label{CDF_full_D_4_1}}\hspace{0.05\textwidth}
\subfloat[Subfigure 2 list of figures text][$ D_1 = 4 $]{
\includegraphics[width=0.4\textwidth]{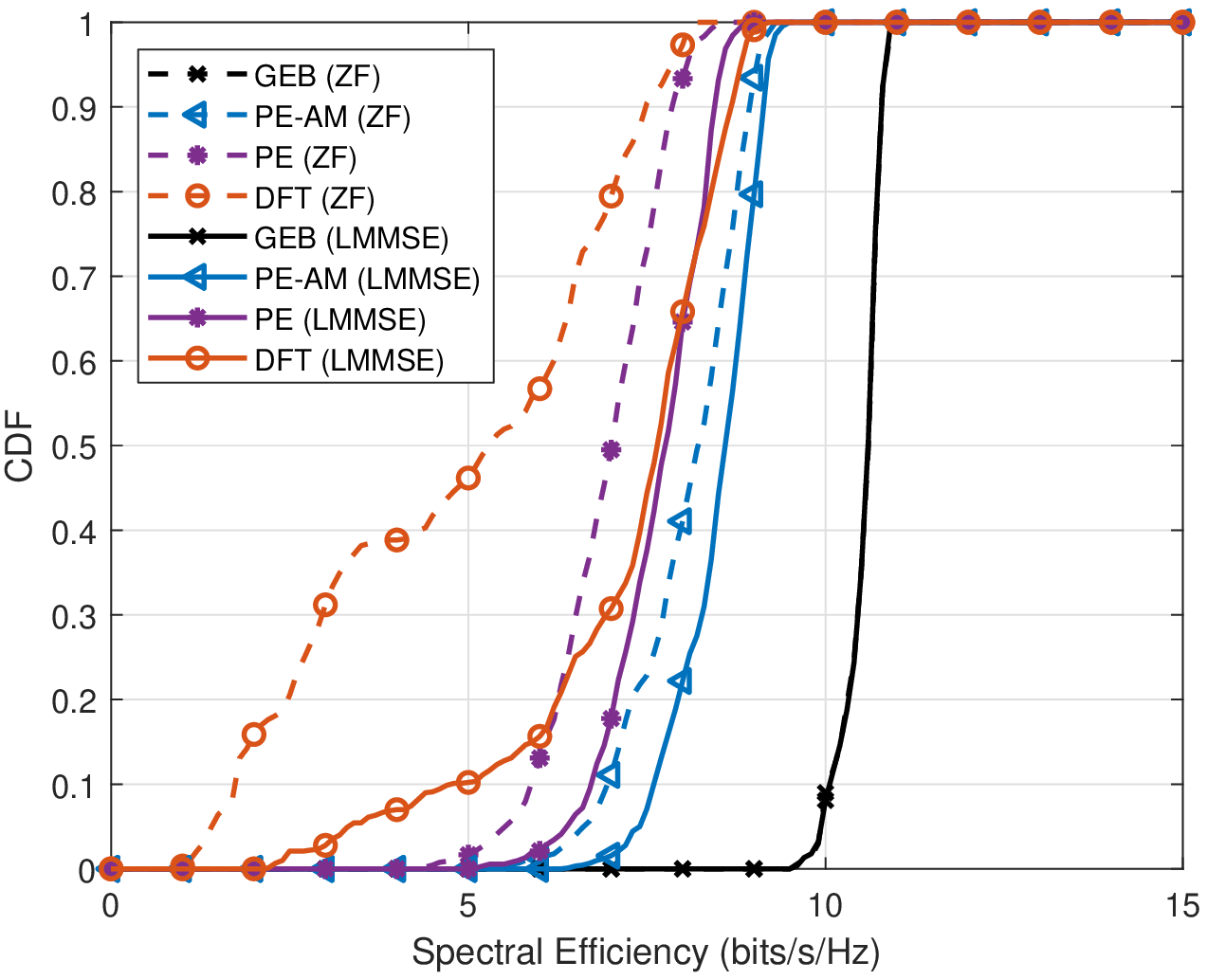}
\label{CDF_full_D_4_3}}

\caption{CDF of spectral efficiency of Group-1 for fully connected array structures with $ E_s^{(g)} = 40 $ dB for $ \forall g $}
\label{CDF_full}
\vspace{-3mm}
\end{figure*}

Fig. \ref{Rate_full} shows the average spectral efficiency of Group-1 with changing symbol energy where symbol energy of other groups is set to $ E_s^{(g')} = 40 $ dB for $ g' \neq 1 $. Note that average spectral efficiency is found by taking the average of ergodic capacity over both users and shifting angle $ \phi $, i.e., $ C^{(1)}_{avg} = \mathbb{E}_{\phi, m} \Big\{ C^{(1_m)}_{\phi} \Big\} $. It can be observed that LMMSE type digital beamformer does not provide a significant improvement over ZF for analog beamformers except DFT beamformer with $ D_1 = 2 $. However, LMMSE provides improvement for all constrained analog beamformers, especially for DFT beamformer, with $ D_1 = 4 $ as it can efficiently suppress residual inter-group interference when the number of RF chains is greater than the number of users. It should be noted that LMMSE and ZF type digital beamformers perform the same for GEB at high symbol energy which shows that there is no residual interference. Furthermore, it is seen that there is a significant gap between DFT beamformer and other constrained beamformers for $ D_1 = 2 $ whereas the gap between PE and PE-AM is very small. However, there is a gap between PE and PE-AM for $ D_1 = 4 $ which is expected since increase in RF chains corresponds to increase in the dimension of the compensation matrix for PE-AM. LMMSE type digital beamformer makes the DFT beamformer perform almost as good as PE for $ D_1 = 4 $.

Previous results show the average performance of analog beamformers in the mobile group. However, angular regions of MPCs belonging to Group-1 overlaps with angular regions of MPCs of other groups at certain $ \phi $ values. It is important to observe the degradation in spectral efficiency with different constrained analog beamformers for overlapping situations. In other words, we need to observe the variation in spectral efficiency to obtain the outage capacity. Let the cumulative density function (CDF) of spectral efficiency of Group-1 is defined as $ P \big( C^{(g)}_\phi<c \big) $ where average spectral efficiency of Group-1 for shifting angle $ \phi $ is denoted by $ C^{(1)}_\phi = \mathbb{E}_m \Big\{ C^{(1_m)}_\phi \Big\} $. Then, CDF curves correspond to outage probabilities for given spectral efficiency values. In Fig. \ref{CDF_full}, CDF curves with different analog beamformers are compared for $ D_1 = 2 $ and $ D_1 = 4 $ where symbol energy of each group is set to $ 40 $ dB. It is observed that CDF curves of DFT beamformer is wider compared to others which results in higher outage probability. For example, outage probability of PE-AM with $ D_1 = 4 $ is close to zero at 7 bits/s/Hz while it is above 0.3 for DFT beamformer. On the other hand, GEB has sharp CDF curves for both $ D_1 = 2 $ and $ D_1 = 4 $ cases which means that it is a robust analog beamforming method. Furthermore, it is observed that PE and PE-AM has fairly sharp CDF curves especially for $ D_1 = 4 $ case.

\begin{figure*}[!h]
\centering

\subfloat[Subfigure 1 list of figures text][Average nMSE vs. $ T $ with $ E_s^{(1)} = 30 $ dB]{
\includegraphics[width=0.4\textwidth]{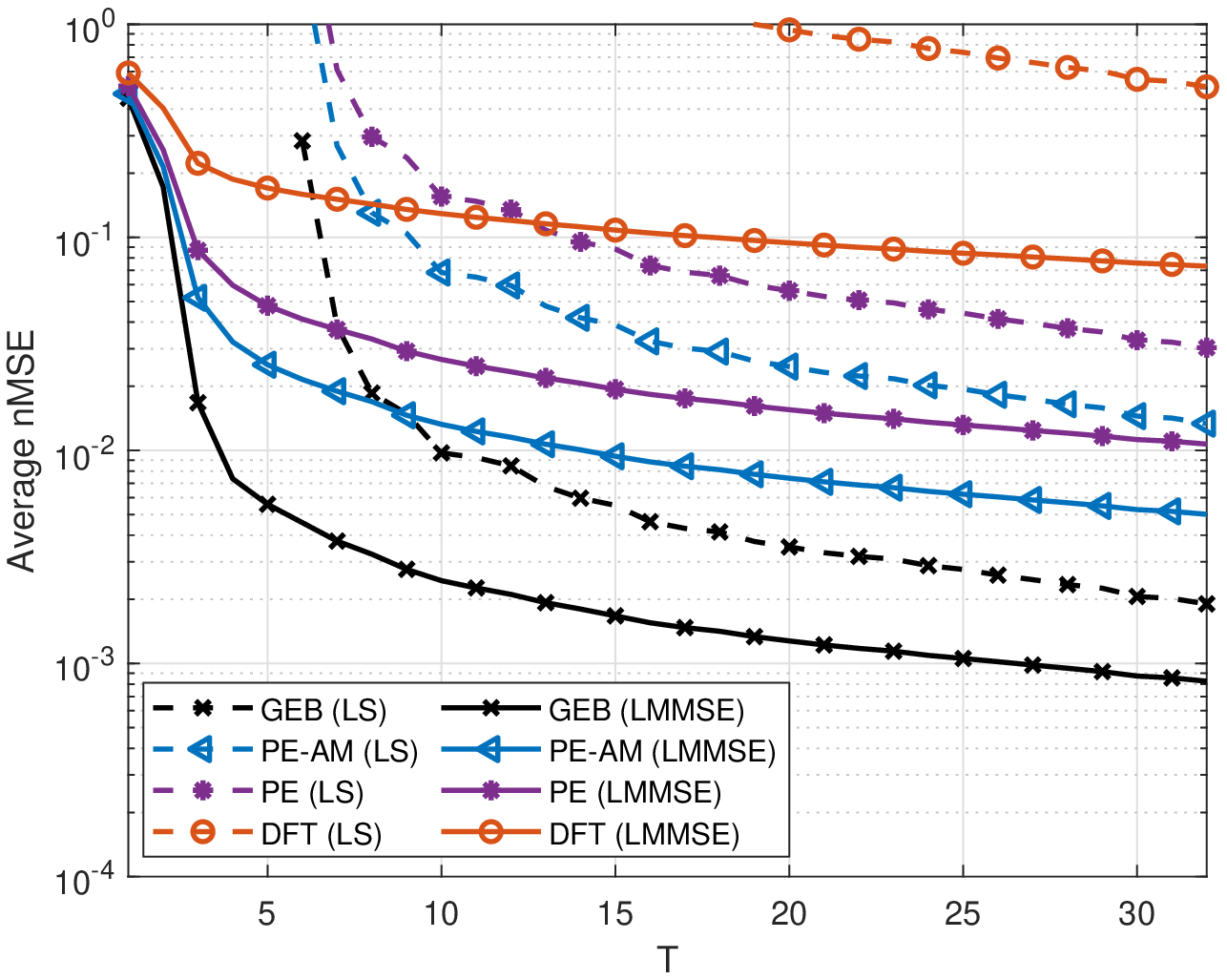}
\label{nMSE_T_full_D_4_2}}\hspace{0.05\textwidth}
\subfloat[Subfigure 2 list of figures text][Average nMSE vs. $ E_s^{(1)} $ with $ T = 10 $]{
\includegraphics[width=0.4\textwidth]{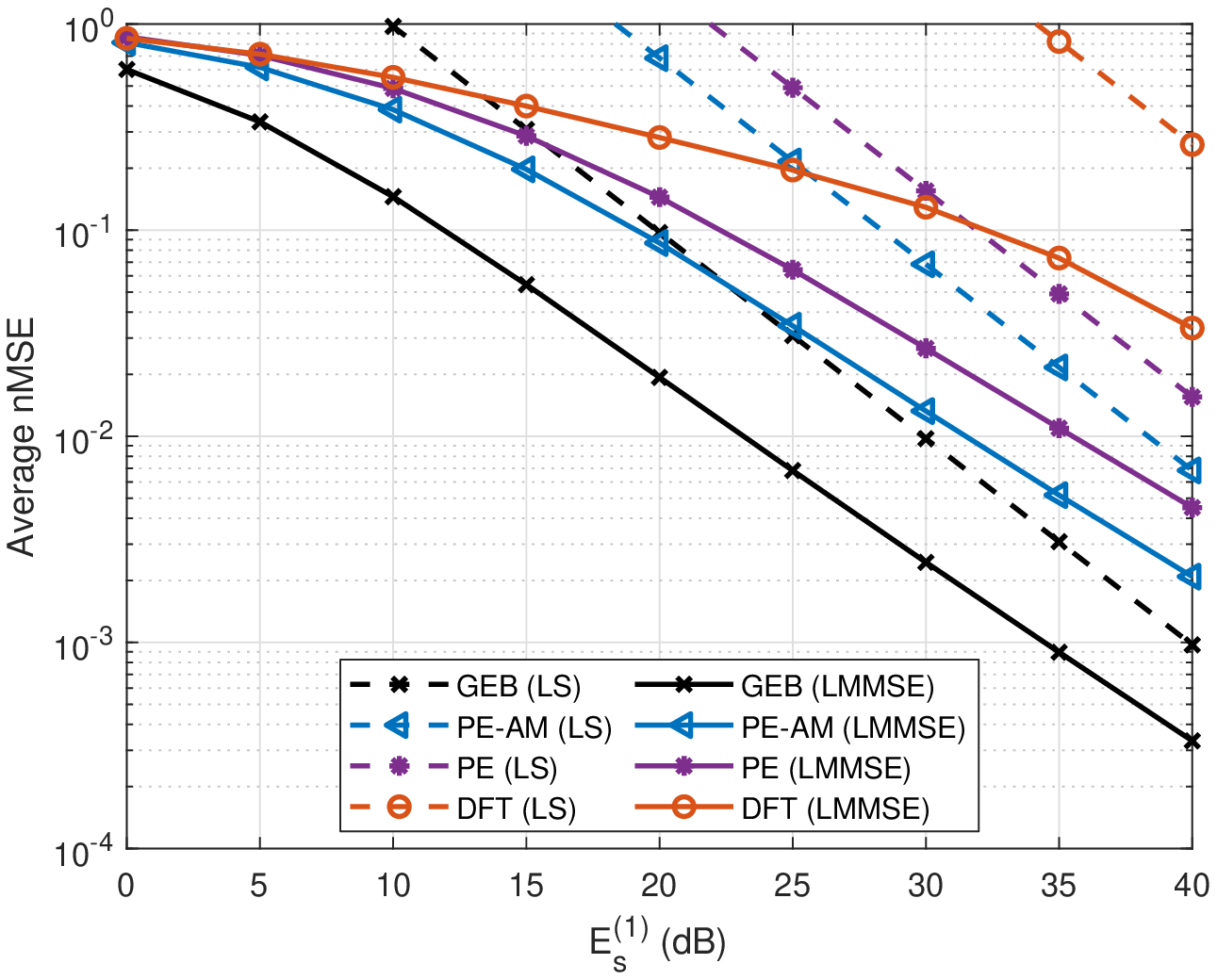}
\label{nMSE_SNR_full_D_4_2}}

\caption{Average nMSE of Group-1 for fully connected array structures with $ D_1 = 4 $ and $ E_s^{(g')} = 40 $ dB for $ g' \neq 1 $}
\label{nMSE_full}
\vspace{-3mm}
\end{figure*}

Channel estimation accuracy is another performance measure that is considered in this paper. Fig. \ref{nMSE_full} shows the average nMSE of Group-1 with $ D_1 = 4 $ where symbol energy of other groups is set to $ E_s^{(g')} = 40 $ dB for $ g' \neq 1 $. Average of nMSE is taken over shifting angle $ \phi $, i.e., $ nMSE^{(1)}_{avg} = \mathbb{E}_{\phi} \Big\{ nMSE^{(1)}_{\phi} \Big\} $. It can be observed that LMMSE estimator outperforms LS estimator especially with short pilot lengths for all analog beamformers. Moreover, it is important to note that nMSE of DFT beamformer is much higher compared to others even with LMMSE estimator while nMSE of PE-AM is less than nMSE of both PE and DFT beamformer.

In spectral efficiency analysis, DFT beamformer performed close to PE for $ D_1 = 4 $ with perfect CSI. However, nMSE results show that spectral efficiency performance of DFT would degrade more compared to other ones if the channel estimation errors are considered. Considering the beampattern, outage capacity and nMSE results, it can be concluded that DFT beamformer should be avoided for JSDM framework especially when the near-far effect is observed  as it does not consider MPCs other groups\footnote{The near-far effect stems from the fact that the average received signal strength of different UEs may differ significantly depending on their distance to the BS.}. Furthermore, PE-AM yields better performance than PE especially when the number of RF chains is not highly restricted since dimensions of compensation matrix are determined by this number. Consequently, PE-AM should be preferred to obtain fully connected constrained analog beamformers.

\subsection{JSDM with Partially Connected Arrays}

In this section, partially connected array structures are considered. While symbol energy of other groups (i.e., inter-group interference) was set to $ 40 $ dB for fully connected structures, we consider a moderate interference scenario for partially connected arrays where symbol energy of interfering groups is set to $ 20 $ dB since even constrained analog beamformers with fully connected arrays do not attain the performance of GEB at high interference. Furthermore, we merge Group-1 and Group-2 and denote it by Group-1$'$ which includes $ 4 $ users and $ 5 $ different MPC clusters. Relative angular regions of MPCs of the merged group is taken as in Fig. \ref{PDP} and the center of the MPCs is swept with shifting angle $ \phi $ from $-45^\circ $ to $45^\circ $ with increments of $0.1^\circ $ as in previous part. We want to serve more users with a single analog beamformer which is the reason for merging groups. We will only be interested in the merged group and design a partially connected array for the analog beamformer of this group\footnote{Other groups can be served by using another array at the BS or scheduling them in time and frequency resources.}. Interference originated from Group-3 and Group-4 can be considered as any interference with known covariance matrix. We will consider two different fixed partially connected arrays. The first one is called adjacent or ordered which has a connection matrix $ \boldsymbol{\Pi}_{ord}^{1'} = \textbf{I}_{D_{1'}} \otimes \textbf{1}_{M/D_{1'}} $, where $ \textbf{1}_{M/D_{1'}} \in \mathbb{Z}^{M/D_{1'}} $ is a vector of ones. Second type of fixed partially connected arrays is called interlaced which has a connection matrix $ \boldsymbol{\Pi}_{int}^{1'} = \textbf{1}_{M/D_{1'}} \otimes \textbf{I}_{D_{1'}} $.

In Fig. \ref{beampattern_part}, beampatterns of Group-1$'$ with fixed and dynamic partially connected arrays are given for $ D_{1'} = 8 $ where MPCs are located as in Fig. \ref{PDP}. We selected the number of RF chains as 8 since it is the minimum power of two that is larger than the number of MPCs. Symbol energy of Group-1 and Group-2 before merging was set to $ 30 $ dB while symbol energy of other groups is set to $ 20 $ dB. We observe that ordered array forms wide beams whereas dynamic subarray can form narrow beams located in the angular regions of Group-1$'$. Hence, dynamic subarray preserved the multipath diversity even though beam power is slightly higher at interfering angular regions. In addition, interference suppression of ordered array would be lost due to wide beams when the MPCs are close to each other in the mobile scenario. On the other hand, interlaced array completely lost interference suppression as it suffers from grating lobe effect which occurs when the antenna spacing of an array is larger than half the wavelength. It is important to note that the proposed dynamic array is seen to effectively compromise between avoiding grating lobes and forming narrower beams.

\begin{figure}[!h]
\centering

\subfloat[Subfigure 1 list of figures text][Fixed - Ordered]{
\includegraphics[width=0.46\textwidth]{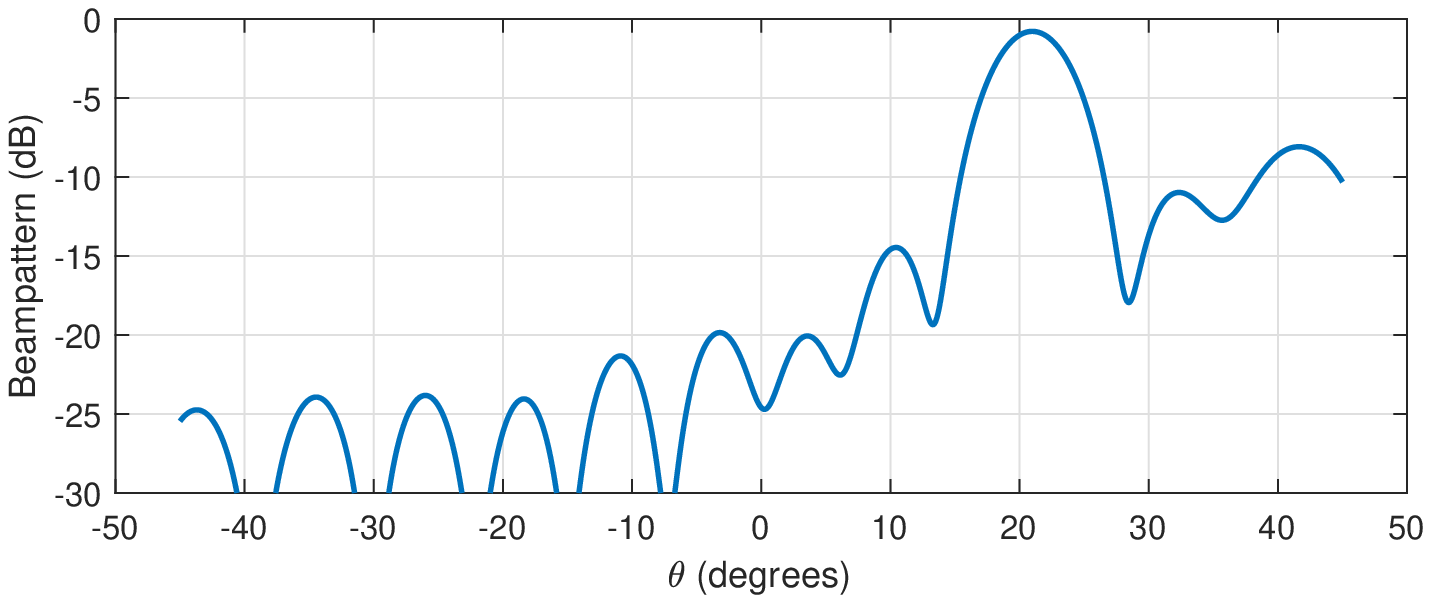}
\label{beampattern_part_D_8_5}}\hfill
\vspace{1mm}
\subfloat[Subfigure 2 list of figures text][Fixed - Interlaced]{
\includegraphics[width=0.46\textwidth]{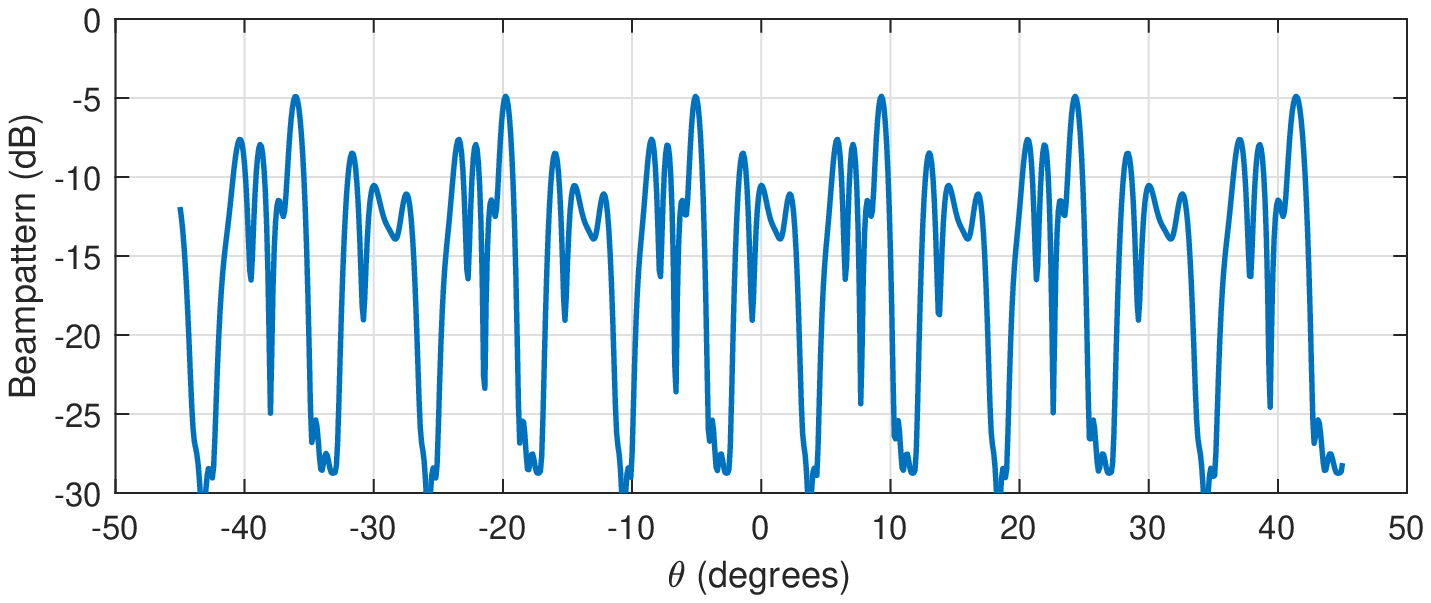}
\label{beampattern_part_D_8_6}}\hfill
\vspace{1mm}
\subfloat[Subfigure 2 list of figures text][Dynamic]{
\includegraphics[width=0.46\textwidth]{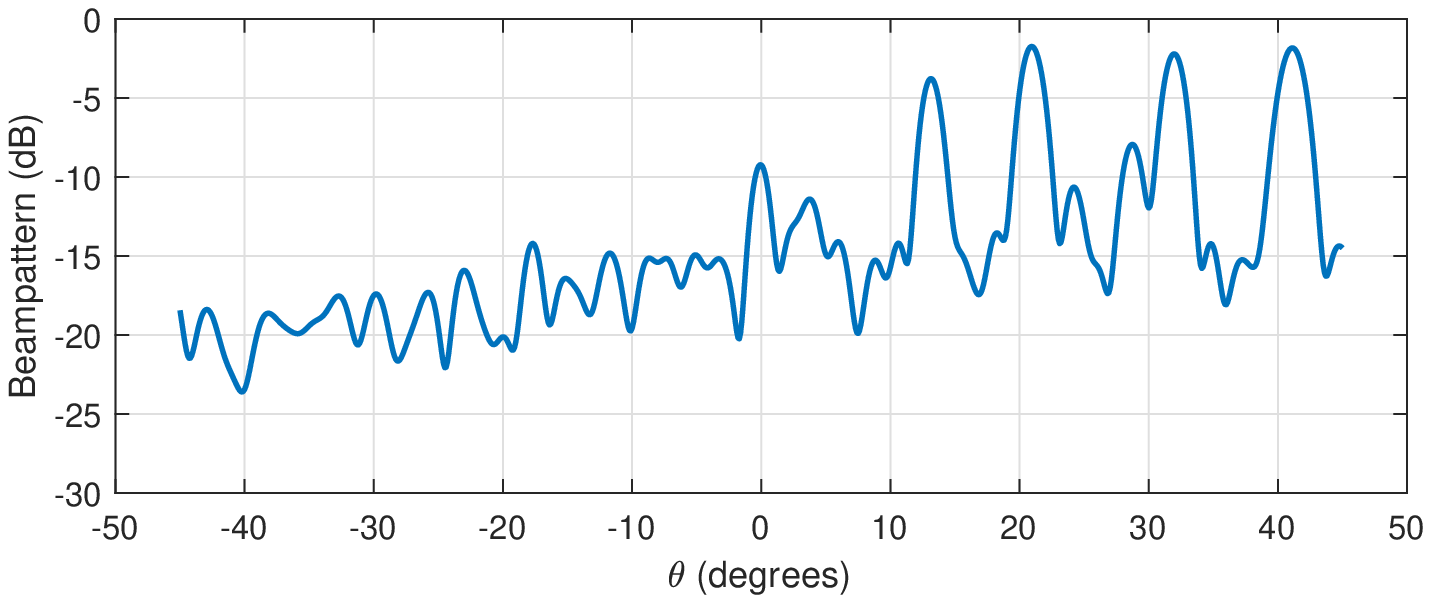}
\label{beampattern_part_D_8_7}}

\caption{Beampatterns of Group-1$'$ for partially connected array structures with $ D_1 = 8 $, $ E_s^{(1)} = E_s^{(2)} = 30 $ dB and $ E_s^{(3)} = E_s^{(4)} = 20 $ dB}
\label{beampattern_part}
\end{figure}

Average spectral efficiency with changing symbol energy of Group-1$'$ is given in Fig. \ref{Rate_part} for $ D_{1'} = 8 $ where fixed and dynamic subarrays and fully connected PE-AM and GEB are considered while symbol energy of other groups are set to $ 20 $ dB. Firstly, it is important to note that PE-AM performs very close to GEB for this scenario where moderate interference is considered. If we compare partially connected array designs, dynamic subarray outperforms both fixed array structures. Another important point is that the gap between LMMSE and ZF type digital beamformers is smaller for the dynamic subarray which indicates that residual interference after the analog beamformer is smaller compared to fixed subarrays. Moreover, the gap between dynamic and fixed subarrays is larger in this paper compared to prior work \cite{Dynamic,DynamicWideband2}. On the other hand, ordered array has higher average spectral efficiency than interlaced array as expected. We can make similar comments if we compare CDF of spectral efficiency curves which are given in Fig. \ref{CDF_part}. We observe that CDF curves of dynamic subarray are parallel to CDF curves of PE-AM and GEB whereas fixed arrays, especially interlaced structure, has wider CDF curves. This result shows the robustness of the dynamic subarray structure. In conclusion, PE-AM attains the performance of GEB and proposed dynamic subarray algorithm performs close to fully connected arrays which shows the resilience of proposed algorithms against interference for the moderate interference scenario.

\begin{figure}[!h]
\centering
\includegraphics[width=0.40\textwidth]{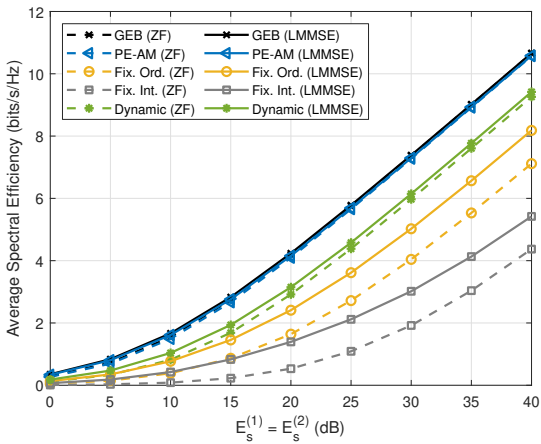}
\caption{Average spectral efficiency vs. $ E_s^{(1)} = E_s^{(2)} $ of Group-1$'$ for partially connected array structures with $ D_{1'} = 8 $ and $ E_s^{(3)} = E_s^{(4)} = 20 $ dB}
\label{Rate_part}
\end{figure}

\begin{figure}[!h]
\centering
\includegraphics[width=0.40\textwidth]{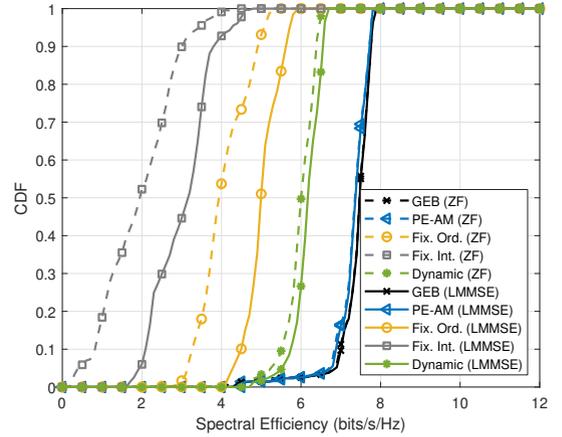}
\caption{CDF of spectral efficiency of Group-1$'$ for partially connected array structures with $ D_{1'} = 8 $, $ E_s^{(1)} = E_s^{(2)} = 30 $ dB and $ E_s^{(3)} = E_s^{(4)} = 20 $ dB}
\label{CDF_part}
\end{figure}

\section{Conclusion}

In this paper, we proposed an interference-aware two-stage beamforming framework, specifically for JSDM, where interference is considered in both analog and digital beamforming stages. Near-optimal slowly varying statistical analog beamformer (i.e., GEB) is designed where inter-group interference suppression is imposed at this stage. GEB is approximated with both fully and partially connected arrays by considering constant-modulus constraint. Furthermore, an algorithm to find the optimal connection structure is proposed for dynamic subarray design. Low-complexity alternating minimization algorithms are employed for obtaining constrained analog beamformers. Superiority of proposed beamformers is shown by using comprehensive analysis tools. It is observed that PE-AM attains the performance of unconstrained GEB with moderate interference strength. Furthermore, this algorithm outperforms commonly used DFT beamformer and phase of GEB. On the other hand, dynamic subarray design performs close to fully connected ones and outperforms fixed subarrays. Moreover, LMMSE type digital beamformers improve the performance of constrained analog beamformers as they take residual interference into account. This work could be extended to a spatial-wideband system as a future study.

\appendices
\vspace{-4mm}

\section{Covariance Matrices in Reduced Dimension}\label{cov_freq}

Covariance matrix of $ \tilde{\textbf{s}}_{f,k}^{(g)} $ is expressed as

\begin{equation}\label{E(lambda)}
\begin{aligned}
& \textbf{R}_{\tilde{\textbf{s}}_{f}}^{(g)} = \mathbb{E} \Big\{ \tilde{\textbf{s}}_{f,k}^{(g)} \big[ \tilde{\textbf{s}}_{f,k}^{(g)} \big]^H \Big\} \\
& = \mathbb{E}\Bigg\{ \frac{1}{\sqrt{N}} \sum_{n=0}^{N-1} \tilde{\textbf{s}}_{n}^{(g)} e^{-j \frac{2 \pi}{N} kn} \sum_{n'=0}^{N-1} \frac{1}{\sqrt{N}} \big[ \tilde{\textbf{s}}_{n'}^{(g)} \big]^H e^{j \frac{2 \pi}{N} kn'} \Bigg\} \\
& = \frac{1}{N} \sum_{n=0}^{N-1} \sum_{n'=0}^{N-1} \mathbb{E}\Big\{ \tilde{\textbf{s}}_{n}^{(g)} \big[ \tilde{\textbf{s}}_{n'}^{(g)} \big]^H \Big\} e^{j \frac{2 \pi}{N} k(n'-n)} \\
& = \frac{1}{N} \sum_{n=0}^{N-1} \sum_{n'=0}^{N-1} \big[ \textbf{S}^{(g)} \big]^H \textbf{R}_{\textbf{s}}^{(g)} \delta_{nn'} \textbf{S}^{(g)} e^{j \frac{2 \pi}{N} k(n'-n)} \\
& = \big[ \textbf{S}^{(g)} \big]^H \textbf{R}_{\textbf{s}}^{(g)} \textbf{S}^{(g)},
\end{aligned}
\end{equation}

\noindent
where $ \textbf{R}_{\textbf{s}}^{(g)} $ is the covariance matrix of intra-group signals of group $ g $ whose expression is given in \eqref{R_s}. Similarly, covariance matrix of $ \boldsymbol{\eta}_{f,k}^{(g)} $ can be expressed as $ \textbf{R}_{\tilde{\boldsymbol{\eta}}_{f}}^{(g)} = \big[ \textbf{S}^{(g)} \big]^H \textbf{R}_{\boldsymbol{\eta}}^{(g)} \textbf{S}^{(g)} $ where expression for $ \textbf{R}_{\boldsymbol{\eta}}^{(g)} $, which is the covariance matrix of sum of inter-group signals and noise, is given in \eqref{R_eta}. It is important to note that these covariance matrices are independent of frequency bin.

\section{Proof of Lemma \ref{LEM1}}\label{proof_lem1}

Replacing $ \textbf{S}^{(g)} $ with $ \textbf{S}^{(g)} \textbf{A} $ results in a similarity transformation for the second term in the determinant in \eqref{det}, i.e., $ \textbf{A}^{-1} \big( [ \textbf{S}^{(g)} ]^H \textbf{R}_{\boldsymbol{\eta}}^{(g)} \textbf{S}^{(g)} \big)^{-1} \big( [ \textbf{S}^{(g)} ]^H \textbf{R}_{\textbf{s}}^{(g)} \textbf{S}^{(g)} \big) \textbf{A} $. Similarity transformation does not affect the cost function since generalized eigenvectors diagonalize this term \cite{GMG}. Hence, alternative representations of GEB can be used without loss of performance.

\section{Expected SINR in Reduced Dimension}\label{stat_SINR}

In reduced dimension, it is possible to find an expected SINR value for each group. Intra-group signal power can be divided by power of sum of inter-group interference and noise in reduced dimension by using \eqref{y_n_g}. With this method, expected SINR of group $ g $ can be expressed as

\begin{equation}
\begin{aligned}
\overline{ SINR }^{(g)} & = \frac{ \mathbb{E} \bigg\{ \norm{ \tilde{\textbf{s}}_{f,k}^{(g)} }^2 \bigg\} }{ \mathbb{E} \bigg\{ \norm{ \tilde{\boldsymbol{\eta}}_{f,k}^{(g)} }^2 \bigg\} } = \frac{ \mathrm{Tr} \bigg\{ \mathbb{E} \Big\{ \tilde{\textbf{s}}_{f,k}^{(g)} \big[ \tilde{\textbf{s}}_{f,k}^{(g)} \big]^H  \Big\}  \bigg\}  }{ \mathrm{Tr} \bigg\{ \mathbb{E} \Big\{ \tilde{\boldsymbol{\eta}}_{f,k}^{(g)} \big[ \tilde{\boldsymbol{\eta}}_{f,k}^{(g)} \big]^H  \Big\}  \bigg\} } \\
& = \frac{ \mathrm{Tr} \Big\{ \textbf{R}_{ \tilde{\textbf{s}}_{f} }^{(g)} \Big\}  }{ \mathrm{Tr} \Big\{ \textbf{R}_{ \tilde{\boldsymbol{\eta}}_{f} }^{(g)}  \Big\} } = \frac{ \mathrm{Tr} \Big\{ \big[ \textbf{S}^{(g)} \big]^H \textbf{R}_{\textbf{s}}^{(g)} \textbf{S}^{(g)} \Big\} }{ \mathrm{Tr} \Big\{ \big[ \textbf{S}^{(g)} \big]^H \textbf{R}_{\eta}^{(g)} \textbf{S}^{(g)}  \Big\} }, 
\end{aligned}\label{SINR_bar}
\end{equation}

\noindent
where expressions for covariance matrices $ \textbf{R}_{ \tilde{\textbf{s}}_{f} }^{(g)} $ and $ \textbf{R}_{ \tilde{\boldsymbol{\eta}}_{f} }^{(g)} $ are taken from Appendix \ref{cov_freq}.

\section{Channel Estimators}\label{channel}

\subsection{LMMSE Type Channel Estimator}

LMMSE type channel estimator of group $ g $ can be written as

\begin{equation}
\textbf{Z}^{(g)} = \textbf{R}_{\bar{\textbf{y}}^{(g)}}^{-1} \textbf{R}_{\bar{\textbf{y}}^{(g)} \bar{\textbf{h}}_{eff}^{(g,g)}} \label{LMMSE}
\end{equation}

\noindent
where covariance matrix between $ \bar{\textbf{y}}^{(g)} $ and $ \bar{\textbf{h}}_{eff}^{(g,g)} $ and covariance matrix of $ \bar{\textbf{y}}^{(g)} $ can be computed as

\begin{equation}
\textbf{R}_{\bar{\textbf{y}}^{(g)} \bar{\textbf{h}}_{eff}^{(g,g)}} = \mathbb{E} \Big\{ \bar{\textbf{y}}^{(g)} \big[ \bar{\textbf{h}}_{eff}^{(g,g)} \big]^H \Big\} = \Big( \textbf{X}^{(g)} \otimes \textbf{I}_{D_{g}} \Big) \textbf{R}_{ \bar{\textbf{h}}_{eff}^{(g,g)}}, \label{R_y_g_h_eff}
\end{equation}

\begin{equation}
\begin{aligned}
& \! \! \textbf{R}_{\bar{\textbf{y}}^{(g)}} = \mathbb{E} \Big\{ \bar{\textbf{y}}^{(g)} \big[ \bar{\textbf{y}}^{(g)} \big]^H \Big\} \\ & \: \: = \Big( \textbf{X}^{(g)} \otimes \textbf{I}_{D_{g}} \Big) \textbf{R}_{ \bar{\textbf{h}}_{eff}^{(g,g)}} \Big( \textbf{X}^{(g)} \otimes \textbf{I}_{D_{g}} \Big)^H
+ \textbf{I}_{T} \otimes \textbf{R}_{\tilde{\textbf{\eta}}}^{(g)},
\end{aligned} \label{R_y_g_bar}
\end{equation}

\noindent
where covariance matrix of sum of inter-group interference and noise terms in reduced dimension is denoted by $ \mathbb{E} \Big\{ \tilde{\textbf{\eta}}_{n}^{(g)} \big[ \tilde{\textbf{\eta}}_{n'}^{(g)} \big]^H \Big\} = \textbf{R}_{\tilde{\textbf{\eta}}}^{(g)} \delta_{nn'} $. This covariance matrix can be expressed as $ \textbf{R}_{\tilde{\textbf{\eta}}}^{(g)} = \big[ \textbf{S}^{(g)} \big]^H \textbf{R}_{\textbf{\eta}}^{(g)} \textbf{S}^{(g)} $. Other groups are assumed to be in data transmission mode in \eqref{R_y_g_bar} according to Remark \ref{rem3}. That is, $ \textbf{X}^{(g')} $ is taken as a random matrix for $ g' \neq g $ which is why $ \textbf{R}_{\tilde{\textbf{\eta}}}^{(g)} $ is used in \eqref{R_y_g_bar}. Covariance matrix of the overall effective channel of group $ g $ is expressed as

\begin{equation}
\begin{aligned}
& \! \! \! \! \textbf{R}_{ \bar{\textbf{h}}_{eff}^{(g,g)}} = \mathbb{E} \Big\{ \bar{\textbf{h}}_{eff}^{(g,g)} \big[ \bar{\textbf{h}}_{eff}^{(g,g)} \big]^H \Big\} \\ & \: \: \: \: = \mathrm{blkdiag} \Bigg\{ \sum_{l=0}^{L-1} \textbf{E}_{L,l+1} \otimes \big[ \textbf{S}^{(g)} \big]^H \textbf{R}_{l}^{(g_m)} \textbf{S}^{(g)} \Bigg\}_{m=1}^{K_g} 
\end{aligned} \label{R_h_eff}
\end{equation}

\noindent
where $ \textbf{E}_{L,l+1} \triangleq \textbf{e}_{L,l+1} \textbf{e}_{L,l+1}^H $ and $ \textbf{e}_{L,l+1} \triangleq \big[ \textbf{I}_L \big]_{(:,l+1)} $. It is important to note that due to uncorrelated structure of channel vectors, \eqref{R_h_eff} has a block diagonal form where blocks in the diagonals are covariance matrices of effective channels.

\subsection{LS Type Channel Estimator}

LS type channel estimator of group $ g $ can be expressed as

\begin{equation}
\textbf{Z}^{(g)} = \bigg( \textbf{X}^{(g)} \Big( \big[ \textbf{X}^{(g)} \big]^H \textbf{X}^{(g)} \Big)^{-1} \bigg) \otimes \textbf{I}_{D_g}. \label{LS}
\end{equation}

However, $ \bar{\textbf{h}}_{eff}^{(g,g)} $ has a sparse structure which degrades the performance of LS estimator given in \eqref{LS}. We should eliminate the columns of $ \textbf{X}^{(g)} $ corresponding to the inactive MPCs and find the estimator in \eqref{LS}. Then, we should insert zero columns to the estimator to make up for the eliminated columns. These zero columns in $ \textbf{Z}^{(g)} $ leads to estimating zero vectors for channels belonging to inactive MPCs. In this way, the dimension of the inverse in the LS estimator is reduced and performance of the estimator is increased.

\bibliographystyle{IEEEtran}
\bibliography{References}

\begin{IEEEbiography}[{\includegraphics[width=1in,height=1.25in,clip,keepaspectratio]{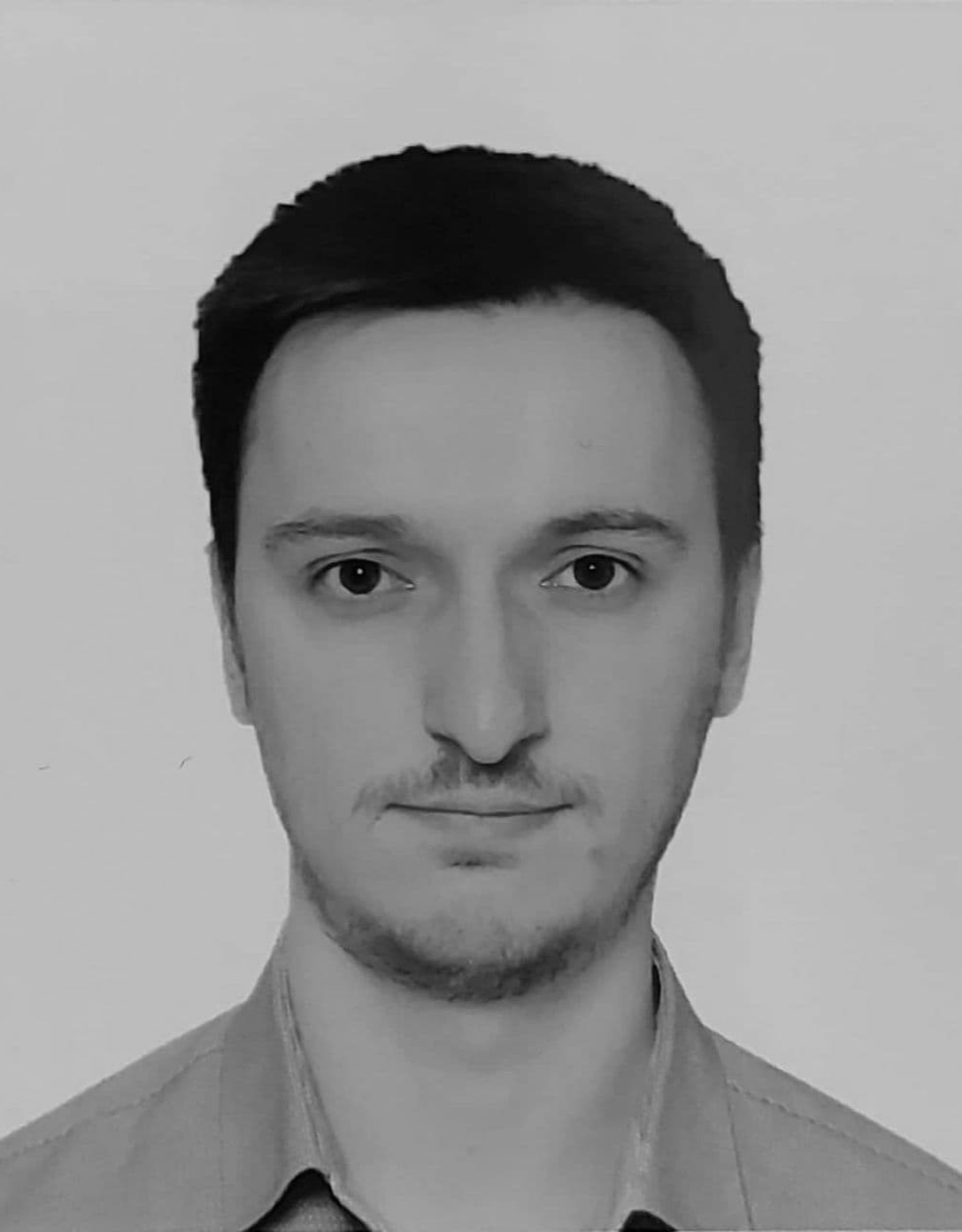}}]{Murat Bayraktar} (S'20) received the B.S. degrees in
electrical and electronics engineering and physics from the Middle East Technical University (METU), Ankara,
Turkey, in 2018. He is currently working toward the
M.S. degree at the Department of Electrical and Electronics Engineering, METU. Since 2019, he has been a research and teaching assistant with the same department. His current research interests include signal processing for wireless communications with particular focus on mm-wave massive MIMO techniques, hybrid analog/digital systems and non-orthogonal multiple access (NOMA) methods.
\end{IEEEbiography}

\begin{IEEEbiography}[{\includegraphics[width=1in,height=1.25in,clip,keepaspectratio]{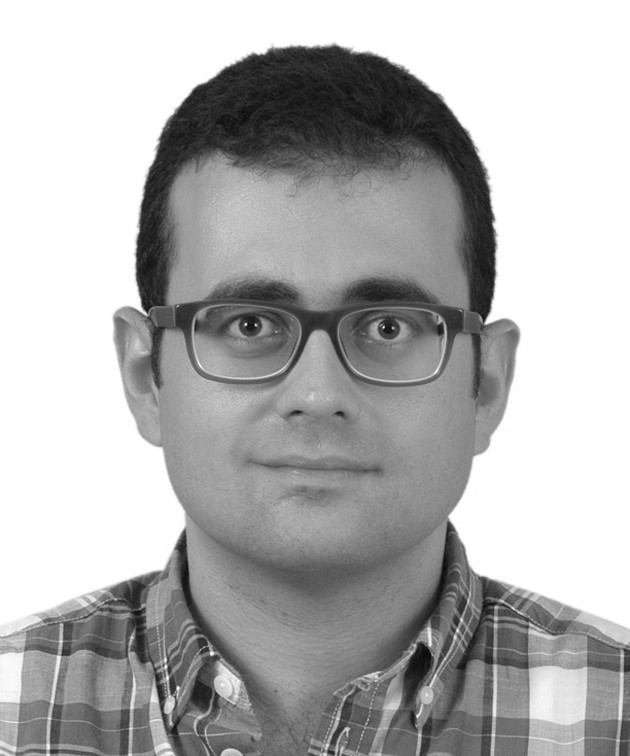}}]{Gokhan M. Guvensen} received his B.S., M.S., and Ph.D. degrees in electrical and electronics engineering from the Middle East Technical University (METU), Ankara, Turkey in 2006, 2009 and 2014, respectively. He worked as a postdoctoral fellow in the Center for Pervasive Communications and Computing (CPCC) in the University of California, Irvine (UCI), USA between 2015 and 2016. In 2017, he joined the Electrical and Electronics Engineering Department at METU, where he is now an Assistant Professor. His research interests include the design of digital communication systems and statistical signal processing with a particular focus on modulation theory, next-generation mobile communication techniques, iterative detection and equalization techniques, information theory, and radar signal processing.
 
\end{IEEEbiography}

\EOD

\end{document}